\newcommand\semihuge{\@setfontsize\semihuge{22.3}{22}}
\newtheorem{theorem}{\bf Theorem}
\newcommand\blfootnote[1]{%
  \begingroup
  \renewcommand\thefootnote{}\footnote{#1}%
  \addtocounter{footnote}{-1}%
  \endgroup
}
\DeclareMathOperator*{\argmax}{\arg\!\max}
\begin{document}\bstctlcite{IEEEexample:BSTcontrol}
\title{\LARGE Distributed Learning for Low Latency Machine Type Communication in a Massive Internet of Things \vspace{-0.3cm}}    

\author{\IEEEauthorblockN{Taehyeun Park$^1$ and Walid Saad$^1$}\vspace{-0.05cm}\\
	\IEEEauthorblockA{\small $^1$Wireless@VT, Bradley Department of Electrical and Computer Engineering, Virginia Tech, Blacksburg, VA, USA,\\ Emails:\{taehyeun, walids\}@vt.edu}
	\vspace{-0.92cm}}
\maketitle\vspace{-0.8cm}
\vspace{-0.1cm}
\begin{abstract}
The Internet of Things (IoT) will encompass a massive number of machine type devices that must wirelessly transmit, in near real-time, a diverse set of messages sensed from their environment. Designing resource allocation schemes to support such coexistent, heterogeneous communication is hence a key IoT challenge. In particular, there is a need for self-organizing resource allocation solutions that can account for unique IoT features, such as massive scale and stringent resource constraints. In this paper, a novel \emph{finite memory multi-state sequential learning} framework is proposed to enable diverse IoT devices to share limited communication resources, while transmitting both delay-tolerant, periodic messages and urgent, critical messages. The proposed learning framework enables the IoT devices to learn the number of critical messages and to reallocate the communication resources for the periodic messages to be used for the critical messages. Furthermore, the proposed learning framework explicitly accounts for IoT device limitations in terms of memory and computational capabilities. The convergence of the proposed learning framework is proved, and the lowest expected delay that the IoT devices can achieve using this learning framework is derived. Furthermore, the effectiveness of the proposed learning algorithm in IoT networks with different delay targets, network densities, probabilities of detection, and memory sizes is analyzed in terms of the probability of a successful random access request and percentage of devices that learned correctly. Simulation results show that, for a delay threshold of $1.25$ ms, the average achieved delay is $0.71$ ms and the delay threshold is satisfied with probability $0.87$. Moreover, for a massive network, a delay threshold of $2.5$ ms is satisfied with probability $0.92$. The results also show that the proposed learning algorithm is very effective in reducing the delay of urgent, critical messages by intelligently reallocating the communication resources allocated to the delay-tolerant, periodic messages. 
\end{abstract} 

\section{Introduction} \label{sec:intro}%\vspace{-0.01cm}
\blfootnote{\noindent This research was supported by the U.S. Office of Naval Research (ONR) under Grant N00014-15-1-2709.}The Internet of Things (IoT) is an emerging networking technology that promises to interconnect a massive number of devices such as wearables, sensors, smartphones, and other machine type devices \cite{saad}. {The IoT will impact multiple application domains including home automation \cite{homeauto}, smart grids \cite{smartgrid, lat2}, drone-based systems \cite{moha1}, healthcare systems \cite{rvwr14}, and industrial monitoring \cite{lat1, generaliot}.} To support such innovative IoT applications, there is a need for new wireless technologies that can enable a large-scale connectivity among IoT devices. {However, integrating the IoT ecosystem into existing wireless networks faces many challenges that include coexistence with human-type devices, self-organizing operation, and limited communication resources \cite{limitspec, rvwr14}.} Moreover, the IoT devices are typically machine-type devices that differ significantly from conventional human-type devices, such as smartphones, in terms of performance requirements, memory, computation, and energy constraints as well as traffic patterns \cite{saad}. In addition, the IoT devices will require low latency, ultra-reliable, and short packet transmissions. Therefore, existing wireless networks must be re-designed to meet these IoT challenges.\\
\indent The IoT devices deployed in an existing wireless network will need to deliver diverse applications and, thus, they will be heterogeneous in terms of their performance requirements and traffic patterns. For example, IoT devices used for smart metering may not require ultra-low latency and will communicate periodically. In contrast, IoT devices used for industrial monitoring will require ultra-low latency and communicate sporadically. Furthermore, such heterogeneous IoT devices with different properties and requirements will have to coexist and share limited communication resources appropriately to satisfy their quality-of-service (QoS) requirements. To incorporate such heterogeneous IoT devices into existing wireless networks with limited resources, a distributed method is necessary to allocate limited resources appropriately depending on the QoS needs of the IoT devices. The IoT devices must be able to obtain their communication resources autonomously, because it is impractical to assume that they can communicate frequently with the base station, given their stringent resource constraints. Moreover, in a massive IoT, the base station will not be able to manage the communication resources of all devices in a timely manner. Therefore, resource allocation in the IoT must be distributed and must also consider the limited capabilities of the IoT devices in terms of computation and memory. A resource allocation framework satisfying the aforementioned requirements will accelerate the deployment of the IoT over existing networks with limited communication resources.
\subsection{Existing Works}
\indent {Resource allocation in the IoT has attracted significant research recently \cite{manage1, manage11, noma, rvwr11, fog1, rvwr21, rvwr22, rvwr23, cap2, lat3, lat4, cran, cendcen, tae2}.} For instance, different multiple access schemes have been proposed in order to maximize energy efficiency \cite{manage1} or throughput \cite{manage11}. Furthermore, non-orthogonal multiple access schemes for IoT networks are analyzed in \cite{noma}, in terms of channel capacity, latency, and connectivity. {In \cite{rvwr11}, the authors introduce an edge-cloud based label-less learning to offload only valuable data to the cloud and to reduce the traffic.} The authors in \cite{fog1} propose a framework based on cognitive edge computing to optimize the use of distributed cloud resources for the IoT. {In \cite{rvwr21}, the authors integrate software-defined networking to machine-to-machine communication to enable a smart energy management for various environments.} {A learning automaton to adjust the access class barring dynamically is proposed in \cite{rvwr22}, while the work in \cite{rvwr23} introduces an energy-efficient, stable, and weak Pareto optimal matching algorithm for device-to-device communication.} In \cite{cap2} and \cite{lat3}, the authors propose different methods to integrate IoT devices into existing cellular networks and share communication resources with existing devices while avoiding network congestion and minimizing system capacity in a massive network. Meanwhile, the work in \cite{lat4} introduces new communication protocols for IoT applications that require ultra-reliability and ultra-low latency. Moreover, the authors in \cite{cran} propose a resource allocation framework exploiting cloud radio access network for joint channel selection and power allocation. In \cite{cendcen}, the authors compare the performance of dynamic time division duplex with centralized and distributed resource allocation schemes in a dense IoT. {The authors in \cite{narrowband} and \cite{rvwr13} provide an overview on power consumption, security, spectrum resource management, and deployment of narrow band IoT and cognitive low-power wide-area networks, respectively.} {In \cite{tae}, we considered a system model in which there are periodic and critical messages coexisting in an IoT system and proposed a learning framework that enables the IoT devices to autonomously allocate the limited communication resources for a highly reliable transmission of the critical messages.} {However, the work in \cite{tae} assumes that there is at most one critical message at any given time, and the IoT devices learn the existence of the critical message to decide to allocate the communication resources for the critical message or not.} In \cite{our_survey}, we provided a qualitative survey that discusses the potential of learning in the IoT. A framework based on sequential learning to learn the number of urgent, critical messages and to allocate corresponding uplink resources is studied in \cite{tae2}. Although message type heterogeneity and the resource constraints of the IoT devices are considered in \cite{tae2}, this work requires a learning sequence for each critical message. Therefore, an IoT device must be part of all learning sequences to accurately learn the number of critical messages, which may not be practical for resource-constrained IoT devices. {Meanwhile, the works in \cite{lat3, lat4, rvwr22}, and \cite{cendcen} do not account for the heterogeneity in communication requirements, traffic patterns, and message types.} Moreover, most of the prior art in \cite{cap2, manage1, lat3}, and \cite{cran} relies on centralized solutions, which may be impractical for a massive IoT. {Further, even though the energy constraints of IoT devices are typically considered, such as in \cite{rvwr23} and \cite{cran}, the IoT devices also have limited capabilities in terms of memory and computation, which are ignored in existing works, \cite{manage1, manage11, cap2, cran}, and \cite{cendcen}.}

\subsection{Contributions}
\indent The main contribution of this paper is a \emph{multi-state sequential learning framework} \cite{cover} with finite memory that can enable the IoT devices to allocate their limited communication resources in a self-organizing manner while satisfying their heterogeneous latency requirements. In particular, we consider a massive IoT system in which there are IoT devices with limited capabilities transmitting either periodic or critical messages. \emph{Periodic messages}, such as meter readings, are delay-tolerant messages with predictable traffic pattern, while \emph{critical messages}, such as system failures, are randomly occurring, delay-intolerant messages. When some devices transmit critical messages, the communication resources that are normally allocated for the periodic messages must be reallocated for the critical messages for their timely transmission. We propose a novel multi-state sequential learning algorithm that allows the IoT devices to learn the number of existing critical messages and, then, to reallocate communication resources appropriately depending on the delay requirement and the number of critical messages. The proposed multi-state sequential learning is suitable for IoT devices as it only requires finite number of observations to learn in a distributed way. Then, we analyze the memory, limited observation capabilities of devices, and expected delay resulting from our proposed algorithm. We show that our proposed learning framework converges to the true state of the environment (i.e., it autonomously learns the true number of existing critical messages). We also derive the lowest expected delay of a critical message that can be achieved with our learning framework. Simulation results show the learning effectiveness of the proposed framework as well as its ability to realize low-latency transmissions across a massive IoT network.\\
\indent The rest of this paper is organized as follows. Section II introduces the system model and Section III presents the proposed learning algorithm and analyzes its properties. Section IV analyzes the simulation results, while Section V draws conclusions. \vspace{-.2cm}
\section{System Model}\label{sec:SM}

\begin{figure}[t]
	\centering
	\includegraphics[scale = 0.65]{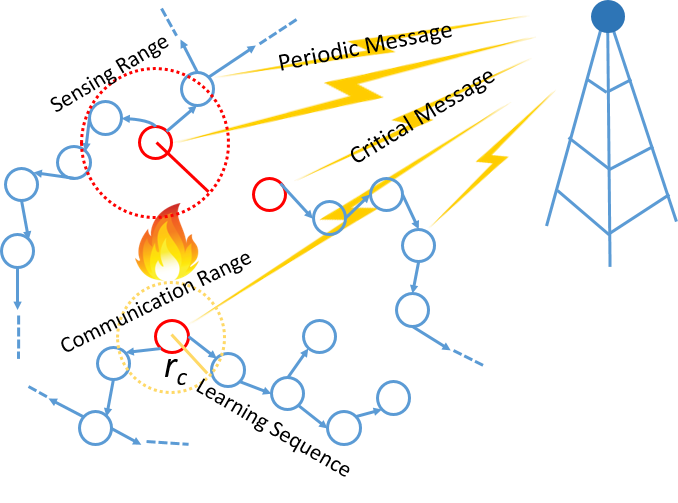}\vspace{-0.4 cm}
	\caption{{An overview of the considered problem setup showing all red devices transmitting critical messages, some blue devices transmitting periodic messages, and the propagations of learning sequences between devices within communication range.}}
	\vspace{-0.3 cm}\label{fig:slpic}
\end{figure}

Consider the uplink of a wireless IoT system consisting of one base station (BS) serving $N$ IoT devices. To transmit their messages to the BS, the IoT devices must first request uplink communication resources via a time-slotted random access channel (RACH) using one out of $p$ random access (RA) preambles \cite{rach} that are typically divided into two types: $p_c$ contention-based RA preambles (RAPs) and $p_f = p - p_c$ contention-free RAPs \cite{rach}. For using a contention-based RAP, the IoT devices choose one out of the $p_c$ preambles randomly. However, if more than one IoT device choose a given contention-based RAP, there will be a collision resulting in a failure to request uplink resources. Therefore, a device will successfully be allocated uplink resources using contention-based RAP, only if it is the sole device using that preamble at that given time slot. For using contention-free RAPs, the devices are assigned one of the $p_f$ preambles by the BS. The preamble usage and allocation among the IoT devices will depend on their message types as discussed later in this section.\\
\indent One of the prominent features of the IoT is heterogeneity in terms of device types, functionalities, and transmitted messages as illustrated in Fig. \ref{fig:slpic}. For instance, some IoT devices, such as smart meters and environment sensors, will be periodically transmitting small packets, such as meter readings, observation reports, and system status reports. However, such devices may need to also transmit urgent, critical messages, such as critical system state condition, power outage, and fire detection reports. These message types differ in terms of quality-of-service (QoS) requirements and traffic patterns. To model heterogeneous messages, we consider the co-existence of \emph{periodic messages}, which are infrequent, periodic transmissions, and \emph{critical messages}, which are critical, urgent transmissions that require very small delay. Depending on the message type, the IoT devices will transmit using most appropriate RAP type.

\begin{table}[t]
\centering\vspace{0mm}
\caption{Summary of notations.}
\label{table}
\begin{tabular}{|l|l|}
\hline
$N$         & Number of IoT devices                                   \\ \hline
$p$         & Number of RAPs                                      \\ \hline
$p_c$, $p_c'$       &\begin{tabular}[c]{@{}l@{}}Number of contention-based RAPs before and after \\ reallocation\end{tabular}   \\ \hline
$p_f$, $p_f'$     & \begin{tabular}[c]{@{}l@{}}Number of contention-free RAPs before and after\\ reallocation\end{tabular}\\ \hline
$T$ & Period for periodic message transmissions \\ \hline
$\tau_i$ & Time slot for first periodic message transmission \\ \hline
$T_{\textrm{min}}$ & Minimum feasible period for periodic messages \\ \hline
$N_p$ & \begin{tabular}[c]{@{}l@{}}Expected number of IoT devices transmitting periodic\\ message at any given time\end{tabular} \\ \hline
$p_{p,s}$ & \begin{tabular}[c]{@{}l@{}}Probability of an IoT device successfully transmitting \\periodic message using contention-based RAP\end{tabular} \\ \hline
$N_a$ & Number of IoT devices transmitting critical message \\ \hline
$p_{a,s}$ & \begin{tabular}[c]{@{}l@{}}Probability of an IoT device successfully transmitting \\ critical message using contention-based RAP\end{tabular} \\ \hline
$D_c$, $D_c'$ & \begin{tabular}[c]{@{}l@{}}Expected delay of critical message under contention-based \\ RAP before and after reallocation \end{tabular}\\ \hline
$D_f$, $D_f'$ & \begin{tabular}[c]{@{}l@{}}Expected delay of critical message under contention-free \\ RAP before and after reallocation \end{tabular} \\ \hline
$D$, $D'$ & Expected delay of a critical message \\ \hline 
$r_d$, $r_d'$ & Detection range for IoT devices before and after learning\\ \hline
$p_{01}$ & Probability of missed detection within detection range \\ \hline
$p_{01}'$ & Probability of missed detection outside of detection range \\ \hline
$\beta$ & Number of contention-free RAPs that are reallocated \\ \hline
$D_{\textrm{th}}$ & Required delay threshold for critical messages \\ \hline
$r_c$ & Communication range for IoT devices \\ \hline
$H_T$ & True underlying state \\ \hline
$s_f$ & Favored state chosen during first phase \\ \hline
$s_f'$ & Any other state that is not favored state \\ \hline
$K$ & Number of observations necessary for first phase \\ \hline
$\alpha$ & \begin{tabular}[c]{@{}l@{}}Number of consecutive observations necessary to go back \\to first phase \end{tabular}\\ \hline
$m$ & Number of bits necessary to learn in second phase\\ \hline
$x_i$ & Private belief of IoT device $i$ based on observations \\ \hline
\end{tabular}\vspace{-4mm}
\end{table}

\subsection{Periodic Messages}
\indent Periodic messages are non-critical and recurring, and, thus, they are typically delay-tolerant with predictable traffic patterns \cite{lat2}. {Moreover, as illustrated by the blue devices in Fig. \ref{fig:slpic}, only some of the IoT devices transmit the periodic messages simultaneously.} Hence, the IoT devices transmitting periodic messages will request uplink resources using an RAP once every period $T$. We let $\tau_i$ for $i = 1, \cdots, T$ be the possible time slots when the IoT devices first transmit their periodic messages. The devices that first transmit in slot $\tau_i$ will transmit in slots $\tau_i + kT$ for $k \in \mathbb{Z}_+$. Since the IoT devices that have the same $\tau_i$ will always transmit simultaneously, it is sufficient to only consider RAP allocation among the IoT devices with same $\tau_i$.
\indent Since it is predictable when and which IoT devices will transmit the periodic messages, contention-free RAPs are more suitable for periodic message transmissions than contention-based RAPs as the contention-free RAPs can guarantee successful uplink resource allocation request for appropriate values of $T$. However, for given values of $N$ and $p_f$, it may be impossible to satisfy $T$, and the minimum period $T_{\textrm{min}}$ of the periodic messages that can be supported will be:
\begin{equation}
T_{\textrm{min}} = \left\lceil\frac{N}{p_f}\right\rceil. \label{tmin}
\end{equation}
\eqref{tmin} results from the fact that at most $p_f$ devices can use contention-free RAPs without collision in a time slot, and an IoT device must wait until all other IoT devices transmit their periodic messages to satisfy $T$ of all periodic messages. If the required period of the periodic messages is less than $T_{\textrm{min}}$, there will be more than $p_f$ IoT devices transmitting simultaneously in some time slots and collisions will occur. For $T \geq T_{\textrm{min}}$, the BS can determine $T$ groups of IoT devices each group with at most $p_f$ IoT devices with same $\tau_i$. Moreover, the BS will allocate different contention-free RAPs to IoT devices with same $\tau_i$. This allocation scheme helps avoiding RAP collision, while maximizing the number of IoT devices that can request uplink resources. The use of contention-free RAPs takes advantage of the predictable traffic patterns of the periodic messages and guarantees that the IoT devices successfully request their uplink resources.\\
\indent Note that contention-based RAPs are not suitable for the periodic messages, because the probability of successful transmission is low, particularly for a massive IoT with large $N$. Assuming that $\tau_i$ are chosen uniformly randomly, if the IoT devices transmitting periodic messages use contention-based RAPs, the expected number $N_p$ of IoT devices transmitting periodic messages at any given time slot will be $N_p = \frac{N}{T}$. The probability $p_{p,s}$ of an IoT device successfully transmitting using the contention-based RAP in a given time slot will then be:
\begin{equation}
p_{p,s} = \left(\frac{p_c - 1}{p_c}\right)^{N_p - 1},
\end{equation}
which represents the probability of all other $N_p - 1$ devices choosing some other contention-based RAP with probability $\frac{p_c - 1}{p_c}$.\\
\indent {Since the IoT will be massive with large $N$, $p_{p,s}$ is small, which implies that very few periodic messages will be successfully transmitted in a given time slot.} For periodic messages with $T \geq T_{\textrm{min}}$, the use of contention-free RAPs guarantees that the IoT devices successfully request uplink resources, while the probability of successful request is low using contention-based RAPs. Therefore, it is more advantageous for the IoT devices transmitting periodic messages to use contention-free RAPs. Hereinafter, we assume that the period $T$ of periodic messages is $T_{\textrm{min}}$ and that $N$ is a multiple of $p_f$ for simplicity. This implies that all $p_f$ contention-free RAPs are used in every time slot.
\subsection{Critical Messages}
\indent Critical IoT device messages contain urgent information about an abnormal event and, thus, they must reach the BS with low latency. Therefore, critical messages are delay-intolerant messages that need retransmissions in case of RAP collision. {Since critical messages are triggered by an unpredictable abnormal event as illustrated in Fig. \ref{fig:slpic}, such as a system failure or forest fire, the system cannot know beforehand which IoT devices will have critical messages to send and when such critical messages will occur.} Furthermore, critical messages can be highly correlated since an abnormal event will typically trigger some critical messages simultaneously. For instance, a forest fire will trigger various devices monitoring different environmental parameters, such as temperature, humidity, or carbon monoxide. Since the network cannot predict when the critical messages will be triggered, how many will be triggered, and which IoT devices will have critical messages, the BS cannot assign pre-determined RAPs, and, thus, IoT devices with critical messages must use contention-based RAPs to request uplink transmission resources.\\
\indent We let $N_a$ be the number of IoT devices having critical messages to send. {$N_a$ depends on the nature and the location of the unpredictable abnormal event. Therefore, $N_a$ is a realization of a random variable, representing an unpredictable abnormal event, and $N_a$ is treated as a constant number.} For IoT devices with critical messages using $p_c$ contention-based RAPs, the probability $p_{a,s}$ of a device being successfully allocated an uplink resource for sending a critical message will be:
\begin{equation}
p_{a,s} = \left(\frac{p_c - 1}{p_c}\right)^{N_a - 1}, \label{critsuccprob}
\end{equation}
which is probability of all other $N_a - 1$ devices choosing some other contention-based RAP with probability $\frac{p_c-1}{p_c}$. {Given $p_{a,s}$, the expected delay $D_c$ of a critical message under a contention-based RAP is defined as the expected number of time slots needed until the first successful acquisition of uplink communication resources using contention-based RAP, as follows:}
\begin{equation}
D_c = p_{a,s}^{-1} = \left(\frac{p_c}{p_c - 1}\right)^{N_a - 1},
\end{equation}
since $D_c$ is a mean of geometric distribution with probability of success of $p_{a,s}$.\\
\indent While using contention-based RAPs, an IoT device having a critical message may be assigned with contention-free RAP for its next periodic message transmission. In this case, instead of using the assigned contention-free RAP for the periodic message, IoT devices with critical messages will use the assigned contention-free RAP for sending their critical messages. {We assume that the period of all periodic messages is smallest possible value of $T_{\textrm{min}}$ and $\tau_i$ are chosen uniformly randomly. Therefore, the expected delay $D_f$ of a critical message under a contention-free RAP is defined as the expected number of time slots needed until an IoT device with critical message is allocated with contention-free RAP, as follows:} \begin{equation}
D_f = \frac{1}{T_{\textrm{min}}} \sum\limits_{j = 0}^{T_{\textrm{min}}-1} j = \frac{T_{\textrm{min}}-1}{2}.
\end{equation}
Since an IoT device having a critical message just needs one successful transmission using either RAP, the expected delay of a critical message is $D = \textrm{min}(D_c, D_f)$.\\
\indent To minimize the expected delay $D$ of the critical messages, $D_c$ or $D_f$ must be minimized. However, decreasing the value of one of them will directly increase the value of the other as $D_c$ and $D_f$ are related with $p_c + p_f = p$. In a massive IoT network with $N \gg p$, $T_{\textrm{min}}$ in \eqref{tmin} will be large even if $p_f = p$, and, thus, $D_f$ will also be large. Since the number of critical messages $N_a$ will typically be small, it is more effective to minimize $D_c$ to minimize $D$. Since the value of $N_a$ cannot be controlled, the value of $p_c$ must be maximized to minimize $D_c$. Since the number of RAPs $p$ is fixed, the only way to increase the value of $p_c$ is to reallocate RAPs from contention-free to contention-based. The method used for such reallocation cannot be centralized since a centralized method requires the BS to have already received the critical messages and to be aware of the abnormal event, which is impractical. Therefore, RAP reallocation must be done in a distributed manner.\\
\indent To reallocate an appropriate number of contention-free RAPs, the IoT devices must be able to observe the number of critical messages $N_a$ in the system. We assume that the overhead signaling for the periodic messages and critical messages are different such that the IoT devices can observe the number of critical messages $N_a$ being transmitted in a given time slot \cite{signal}. However, the IoT devices may not be able to accurately observe $N_a$, since they cannot observe all of the critical messages. For instance, an IoT device that is located far from the abnormal event may not be able to observe the event and its corresponding critical messages. IoT devices may also be prone to missed detection even when they are in proximity to the abnormal event. To capture the limited observation capability, we assume that IoT devices within \emph{a detection range} $r_d$ have a probability of missed detection $p_{01}$, while IoT devices outside of the detection range $r_d$ have a much higher probability of missed detection, $p_{01}'$.\\
\indent In order for the IoT devices to accurately know $N_a$ despite their limited observation capability, they can employ a distributed learning process \cite{fm1} using which the devices can collectively learn the number of critical messages $N_a$ in a self-organizing manner. {For instance, IoT devices with limited observation capabilities can use the observations of other IoT devices to have a more accurate estimation of $N_a$.} For the reallocation of RAPs, we assume that there is \emph{a globally known order of contention-free RAPs that will be reallocated to contention-based RAPs.} When an IoT device having a periodic message learns that there are $N_a$ critical messages, it will stop using the first $\beta$ contention-free RAPs of the order. This means that even if the IoT device with periodic message is assigned to use one of the $\beta$ RAPs by the BS, it will not transmit. When an IoT device having a critical message learns that there are $N_a$ critical messages, it will use one of the first $\beta$ contention-free RAPs of the order or one of the contention-based RAPs. Therefore, the number of contention-based RAPs will increase from $p_c$ to $p_c' = p_c + \beta$, and the number of contention-free RAPs will decrease from $p_f$ to $p_f' = p_f - \beta$. The expected delay $D_c'$ from using contention-based RAP, the expected delay $D_f'$ from using contention-free RAP, and the expected delay $D'$ of critical message \emph{after learning} will, respectively, be given by:
\begin{equation}
D' = \textrm{min}(D_c', D_f') = \textrm{min}\left(\left(\frac{p_c'}{p_c' - 1}\right)^{N_a-1}, \frac{\left\lceil\frac{N}{p_f'}\right\rceil-1}{2}\right). \label{newdelay2}
\end{equation}
For a given design parameter threshold delay $D_{\textrm{th}}$ that the critical messages should satisfy, the value of $\beta$ is determined to satisfy $D' \leq D_{\textrm{th}}$. Moreover, the value of $\beta$ should be appropriate as an excessive value of $\beta$ will greatly disturb the transmission of periodic messages.\\
\indent To develop an effective learning approach, one must take into account specific characteristics of the IoT devices. As the IoT devices are small, low cost devices, their computational capability will be limited. Moreover, they are constrained by very limited battery life, and, thus, they cannot communicate frequently with the BS or with other IoT devices. Further, the IoT devices may also be limited in terms of how much information they can store and process. Therefore, they may not have sufficient information for accurate learning, and their learning method must account for the lack of memory. Also, the IoT devices may have inaccurate information about $N_a$ due to their limited observability of the environment. For applications that require low latency, the expected delay of critical messages must be reduced in relatively real time, and, thus, the IoT devices must quickly and accurately learn $N_a$. Therefore, the learning scheme used by the IoT devices must be computationally simple, distributed, quick, and accurate without requiring excessive memory and frequent communication. To this end, one can use the framework of \emph{finite memory sequential learning} \cite{cover}. Even though this framework has been used before in decentralized, binary decision making, this cannot be readily applied to the IoT as it is only limited to binary state learning. Moreover, the resource constraints of IoT devices in terms of computation and memory, the limited communication resources, and the unique properties of IoT system must be considered. Hence, next we develop a new sequential learning framework tailored to the studied IoT system.
\section{Finite Memory Multi-state Sequential Learning}\label{sec:learning}
For our model, the IoT devices with their inherent limited memory and computation capabilities must learn to satisfy their heterogeneous QoS requirements, in presence of both known periodic messages and unknown critical messages. To this end, we propose a novel multi-state sequential learning framework for enabling the IoT devices to learn the true value of $N_a$ to determine the number of RAPs to reallocate $\beta$ to satisfy the delay threshold $D_{\textrm{th}}$ of critical messages. The IoT devices learning the true value of $N_a$ can be mapped to agents learning the true state of a system with multiple states. For instance, the different values of $N_a$ that the IoT devices can learn can be mapped to different system states, while the true value of $N_a$ is the true, underlying state that the IoT devices should learn. By mapping different values of $N_a$ to different states, we propose \emph{a multi-state learning framework for dynamic reallocation of RAPs}, while taking into account the requirements of the IoT devices. {Furthermore, our learning method only requires an IoT device to be able to communicate with at least one neighboring device within the communication range $r_c$. An IoT device will receive a finite memory of observations of other devices from any one of the neighboring devices. This finite accumulation and flow of information is illustrated in Fig. \ref{fig:slpic} and further discussed in Section III.A and III.B.}\\
\begin{figure}
\centering
\includegraphics[scale=.45]{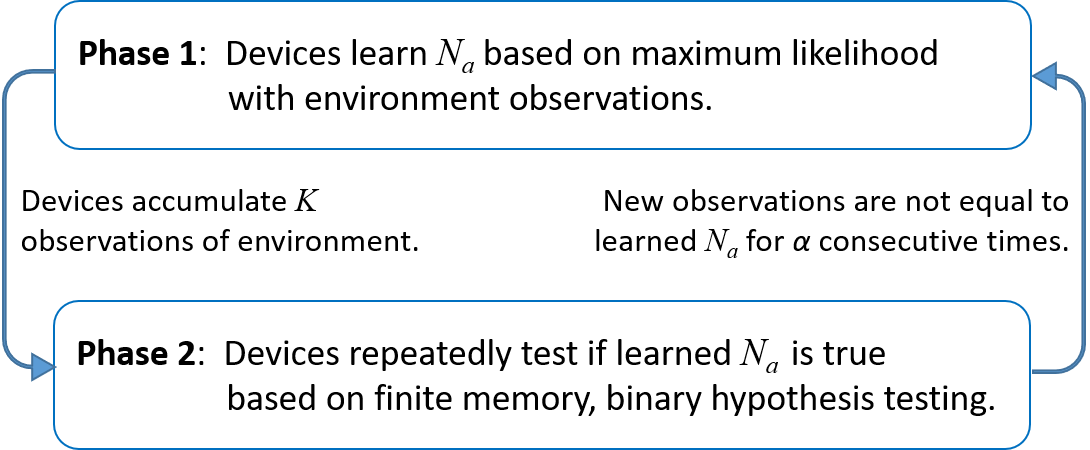}\vspace{-.3cm}
\caption{Flowchart of the proposed finite memory multi-state learning.}\vspace{-.5cm}
\label{flowchart}
\end{figure}
\indent As shown in Fig. \ref{flowchart}, we propose a learning algorithm composed of two phases, which are used to choose a state and to test whether the chosen state is the true underlying state $H_T$. A state is chosen out of all possible states during the first phase, and the chosen state is called a \textit{favored state}. During the second phase, it is repeatedly tested whether the favored state should be changed by going back to the first phase. {In our model, the IoT devices learn $N_a$ that is most likely to be the true value based on their observations and the observations of other devices during the first phase. This learned $N_a$ is a favored state. During the second phase, the following IoT devices will be informed about the learned $N_a$ from the first phase and will repeatedly test whether the learned $N_a$ is a true value or not.} If it is determined that the learned value is not the true value of $N_a$, then the learning method reverts back to the first phase. Moreover, an IoT device belongs to either first phase or second phase and only learns once to prevent the learning sequence from oscillating.\\ 
\indent {The sequential learning scheme is implemented \emph{individually by all devices}, and the IoT devices will learn the true $N_a$ progressively as they get necessary information from the neighboring devices. Moreover, the proposed learning framework does not require a training data set before the devices use the proposed learning framework to learn the true $N_a$. Furthermore, the IoT devices do not converge to the final decision on $N_a$ simultaneously. Although the devices may learn $N_a$ incorrectly due to limited observability, the effect of limited observability can be mitigated by increasing the memory size $m$ to extend the detection radius as discussed in Section III.B. Furthermore, the devices cannot determine the validity of learned value of $N_a$, but the proposed learning framework is proved to converge to the true value of $N_a$ in probability as discussed in Section III.C. As learning propagates sequentially, the IoT devices adjust their RAP usage as soon as they learn $N_a$, and, thus, the reallocation of $\beta$ contention-free RAPs is done gradually.} Moreover, $\beta$ depends on the learned value of $N_a$ and the given value of design parameter $D_{\textrm{th}}$ as further discussed in Subsection III.D.\\
\indent For both phases, the observations are critical in learning $N_a$ correctly, and more observations typically result in higher probability of learning correctly. However, it is unrealistic to assume that an IoT device is able to know, store, and process observations of all other IoT devices, which corresponds to \emph{infinite memory} sequential learning in \cite{cover}. Therefore, our proposed learning framework has \emph{finite memory} \cite{fm1} as it assumes that an IoT device only utilizes some of the observations of other IoT devices. In other words, the finite memory learning requires finite number of observations, while the infinite memory learning requires infinite number of observations. The convergence of our finite memory learning framework is shown in Subsection III.C.
\subsection{First Phase of Sequential Learning}
For an appropriate reallocation of contention-free RAPs, the IoT devices must learn the true value of the number of critical messages $N_a$. However, the devices may learn wrong values of $N_a$ due to limited observability. {We let $\mathcal{S}$ be the set of all possible values of $N_a$, and $\mathcal{S}$ is a set of all possible states.} Here, the true value of $N_a$ is the true underlying state $H_T \in \mathcal{S}$. {During the first phase, the IoT devices chose a favored state $s_f \in \mathcal{S}$ after $K$ observations, which provide information about $H_T$.} {$s_f$ is a value that the devices learn to be $H_T$. However, the devices do not know if $s_f = H_T$ or $s_f \neq H_T$, which is repeatedly tested during the second phase of learning. Moreover, $K \ll N$ as we assume that the IoT devices are limited in terms of memory and computational capabilities.}\\
\indent In our model, the IoT devices observe how many critical messages exist depending on their limited observation range $r_d$. For simplicity, we model an observation of the environment $e$ to be an element of $\mathcal{S}$ such that, for any given $H_T \in \mathcal{S}$, 
\begin{align}
0 < \frac{\Pr(e = j \mid H_T)}{\Pr(e = k \mid H_T)} &< \infty \ \forall j, k \in \mathcal{S}\label{likelihood1},\\
\Pr(e = H_T \mid H_T) &\gg \Pr(e \neq H_T \mid H_T)\nonumber,\\
\Pr(e = i \mid i) &\gg \Pr(e = i \mid j) \ \forall i, j \in \mathcal{S} \label{assump1},\\
\Pr(e = j \mid i) &\approx \Pr(e = k \mid i) \ \forall i, j, k \in \mathcal{S} \label{assump3},
\end{align}
where $\Pr(e = i | j)$ is probability of observing $i$ when $j$ is the true underlying state. \eqref{likelihood1} is necessary to ensure that the learning process is not trivial by preventing the likelihood ratio of observations from being $0$ or $\infty$. In \eqref{assump1}, we assume that a given device is more likely to observe the true underlying state $H_T$ than any other state in $\mathcal{S}$ and that it is most likely to observe $i \in \mathcal{S}$ when the true underlying state is $i$. Furthermore, in \eqref{assump3}, the probabilities of observing the states that are not true underlying state are approximately equal. We assume that the environment observations, which correspond to the observed numbers of critical messages, are independent, because the observability of an IoT device depends only on the observation range $r_d$.\\
\indent The first phase is initiated by an abnormal event, and the devices that are first to observe the abnormal event will have to transmit critical messages. {Furthermore, as illustrated by the red devices in Fig. \ref{fig:slpic}, those devices propagate the learning sequence to their neighboring devices within $r_c$ by transmitting their observations of the environment.} Those neighboring devices constitute the next set of devices that learn in the first phase, and these devices learn $N_a$ based on their observations and the information from the device that learned first. {Therefore, the first phase propagates sequentially to the various IoT devices within $r_c$, forming a learning sequence as shown in Fig. \ref{fig:slpic}.} When a device obtains $K$ observations, it determines $s_f$ based on $K$ observations and initiates the second phase. {If an IoT device that has not yet learned receives the necessary information for learning from more than one neighboring device simultaneously, then it will randomly choose one of the learning sequences to continue.}\\
\indent We assume that $r_c < r_d$ as the communication links between IoT devices are usually short-ranged. Moreover, it is desirable for $K$ to be small so that $s_f$ is chosen quickly and that the learning method converges faster to $H_T$. For small $K$, the memory and computational requirements for the IoT devices will also be small due to the reduced amount of information that must be stored and processed, and, thus, the critical messages will be successfully transmitted with small delay as the learning converges faster to $H_T$. Hence, the IoT devices during the first phase will most likely be within $r_d$. Since the probability of missed detection $p_{01}$ within $r_d$ is usually close to $0$ \cite{learning6}, the IoT devices participating in the first phase are most likely to observe $e = H_T$. Therefore, \eqref{assump1} holds true for the IoT devices within $r_d$, where the first phase always occurs.\\
\indent {While the IoT devices accumulate $K$ independent observations of the environment, they can make an intelligent decision on which state is most likely to be the true underlying state $H_T$, which is true value of $N_a$.} This decision will be based on \textit{maximum likelihood}, whose computational complexity can be simplified given \eqref{assump1}. For a set of $K$ observations $\{e_1, \cdots, e_K\}$, the favored state $s_f$ will be:
{\vspace{-2mm}\begin{align}
s_f &= \argmax\limits_{s \in \mathcal{S}} \Pr(e_1, \cdots, e_K \mid s),\\
&= \argmax\limits_{s \in \mathcal{S}} \prod\limits_{j=1}^{K} {\Pr(e_j \mid s)} = \argmax\limits_{s \in \mathcal{S}} \prod\limits_{j\in\mathcal{S}} \Pr(j \mid s)^{k_j},\\
&= \argmax\limits_{s \in \mathcal{S}} \sum\limits_{j\in\mathcal{S}} {k_j}\ln(\Pr(j\mid s)), \label{ml}
\end{align}}
where $k_j$ is the number of times that state $j \in \mathcal{S}$ is observed out of $K$ observations. If the likelihoods $\Pr(e = i\mid j) \ \forall i, j \in \mathcal{S}$ are known by the devices, then $s_f$ can be determined easily using \eqref{ml}. However, when the likelihoods $\Pr(e = i\mid j) \ \forall i, j \in \mathcal{S}$ are unknown, the maximum likelihood can be simplified using \eqref{assump1} and \eqref{assump3} and, then, the favored state $s_f$ will be $a \in \mathcal{S}$ if:
\begin{equation}
\frac{\sum\limits_{j\in\mathcal{S}} {k_j} \ln(\Pr(j \mid a))}{\sum\limits_{j\in\mathcal{S}} {k_j} \ln(\Pr(j \mid b))} \approx \frac{\sum\limits_{j\in \mathcal{S}, j\neq a} k_j}{\sum\limits_{j\in \mathcal{S}, j \neq b} k_j} = \frac{K - k_a}{K - k_b} \leq 1 \ \forall \ b \in \mathcal{S}.
\end{equation}
Therefore, with simplifications, the decision process in the first phase will be such that the devices will be able to designate one of the observed states as a favored state $s_f$.\\
\indent {The communication overhead involved in the first phase of sequential learning pertains to each IoT device transmitting at most $K-1$ observations to its neighboring devices only once. Moreover, based on the aforementioned simplifications using \eqref{assump1} and \eqref{assump3}, the computational complexity involved in the first phase of sequential learning is $O(K)$, which is reasonable.}
\subsection{Second Phase of Sequential Learning}
{Once the IoT devices learn $s_f$ to be the true value of $N_a$ after $K$ observations, the validity of $s_f$ must be repeatedly tested, given the possibility of learning error due to limited observability.} If it is found that $s_f \neq H_T$, then $s_f$ must be changed to some other state in $\mathcal{S}' = \mathcal{S}\setminus \{s_f\}$. Since the second phase tests whether $s_f$ must be changed or not, this can be modeled as binary hypothesis testing between $s_f = H_T$ and $s = H_T$ for some state $s \neq s_f$, $s \in \mathcal{S}'$, and the IoT devices in second phase each test with their available information. This binary hypothesis testing uses binary environmental observations to test whether the favored hypothesis should be changed, and the favored hypothesis will converge to the underlying truth by design.\\
\indent As our learning method is multi-state, the devices will observe more than two states. However, if a device observes any state in $\mathcal{S}'$, it will be equivalent to observing an unfavored state $s_f'$. Therefore, the number of possible observations reduces to two: $s_f$ and $s_f'$, during the second phase. Here, observing $s_f$ is equivalent to $1$ and observing $s_f'$ is equivalent to $0$. Similar to the observations in first phase, an environment observation $e$ in the second phase is such that, for an underlying true state $H_T \in \mathcal{S}$, 
\begin{align}
\Pr(e = s_f& \mid H_T) + \Pr(e \in \mathcal{S}' \mid H_T) = 1 \\ 0 &< \frac{\Pr(e = s_f \mid H_T)}{\Pr(e = s_f' \mid H_T)} < \infty. \label{likelihood2}
\end{align}
\indent If $s_f \neq H_T$ is chosen during the first phase and the environment observations repeatedly indicate $s_f'$, then the favored state must be changed. {Therefore, a mechanism for returning to the first phase to change $s_f$ is necessary.} We set a threshold $\alpha$ such that if the environment observations indicate $s_f'$ for a total of $\alpha$ consecutive times, then the learning method will go back to the first phase to choose $s_f$ again with some probability $p_b$, which will be further discussed later. Using a threshold, our learning method may go back to the first phase both when $s_f = H_T$ and when $s_f \neq H_T$, and the probabilities of returning to the first phase for both cases are critical for the convergence of the proposed learning method.\\
\indent The probability of observing $s_f'$ for $\alpha$ consecutive times when $s_f \neq H_T$ is:
{\vspace{-2mm}{\begin{align}
\Pr(\{e_{n_1-\alpha+1}, \cdots, e_{n_1}\} &= \{s_f', \cdots, s_f'\}  \forall n_1 \geq \alpha \mid s_f \neq H_T)\\
&= (1 - \Pr(e = s_f \mid s_f \neq H_T))^{\alpha} \label{prob1},
\end{align}}}
\noindent where $e_j$ is $j$-th observation of the environment and $n_1 \in \mathbb{Z}_+$ is the number of observations necessary for $\alpha$ consecutive observations of $s_f'$ to occur when $s_f \neq H_T$. $n_1$ is a random variable, and it is essential that the learning method quickly reverts to the first phase if $s_f \neq H_T$. {Therefore, the expected value $\mathbb{E}[n_1\mid s_f \neq H_T]$ should be small.} If $s_f$ is observed after $s_f'$ is observed $j < \alpha$ consecutive times, then the number of consecutive observations of $s_f'$ is reset to $0$, and this is equivalent to increasing $\mathbb{E}[n_1 \mid s_f \neq H_T]$ by $j+1$. With $p_1 = \Pr(e = s_f \mid s_f \neq H_T)$, the aforementioned case occurs with probability $p_1(1-p_1)^j$. Therefore, the expected value of $n_1$, for any environment observation $e$, will be:
{
\vspace{0mm}{{\begin{align}
&\mathbb{E}[n_1 \mid s_f \neq H_T]\\
&= \sum\limits_{j = 0}^{\alpha-1} p_1(1 - p_1)^j \left(\mathbb{E}[n_1 \mid s_f \neq H_T]+j+1\right) + \alpha(1 - p_1)^\alpha,\\
&= \mathbb{E}[n_1 \mid s_f \neq H_T]p_1\left(\sum\limits_{j=0}^{\alpha-1} (1 - p_1)^j \right) \nonumber\\
& \hspace{1.2cm}+p_1\left(\sum\limits_{j=0}^{\alpha-1} (1 - p_1)^j(j+1) \right) + \alpha(1-p_1)^\alpha,\\
&= \mathbb{E}[n_1 \mid s_f \neq H_T]p_1\frac{1 - (1 - p_1)^\alpha}{1 - (1 - p_1)} \nonumber \\
& \hspace{1.2cm}+p_1\left(\sum\limits_{j=0}^{\alpha-1} (1 - p_1)^j(j+1) \right) + \alpha(1-p_1)^\alpha,\\
&= \frac{p_1\left(\sum\limits_{j=0}^{\alpha-1} (1 - p_1)^j(j+1) \right) + \alpha(1-p_1)^\alpha}{(1 - p_1)^\alpha}. \label{exp1}
\end{align}}}}
\indent Similarly, the probability of observing $s_f'$ for $\alpha$ consecutive times when $s_f = H_T$ is
{\begin{align}
\Pr(\{e_{n_2-\alpha+1}, \cdots, e_{n_2}\} &= \{s_f', \cdots, s_f'\} \forall n_2 \geq \alpha \mid s_f = H_T)\\
&= (1 - \Pr(e = s_f \mid s_f = H_T))^{\alpha} \label{prob2},
\end{align}}
where $n_2 \in \mathbb{Z}_+$ is the number of observations necessary for $\alpha$ consecutive observations of $s_f'$ to occur when $s_f = H_T$. Similar to $n_1$, $n_2$ is a random variable, but it must be highly unlikely that the learning method goes back to the first phase if $s_f = H_T$. In other words, the expected value $\mathbb{E}[n_2\mid s_f = H_T]$ should be very large. If $s_f$ is observed after $s_f'$ is observed $j < \alpha$ consecutive times, then the number of consecutive observations of $s_f'$ is reset to $0$, and this is equivalent to increasing $\mathbb{E}[n_2 \mid s_f = H_T]$ by $j+1$. With $p_2 = \Pr(e = s_f \mid s_f = H_T)$, the aforementioned case occurs with probability $p_2(1-p_2)^j$. Therefore, the expected value of $n_2$, for any environment observation $e$, will be: 
{
\vspace{0mm}{{\begin{align}
&\mathbb{E}[n_2 \mid s_f = H_T] \\
&= \sum\limits_{j = 0}^{\alpha-1} p_2(1 - p_2)^j \left(\mathbb{E}[n_2 \mid s_f = H_T]+j+1\right) + \alpha(1 - p_2)^\alpha,\\
&= \mathbb{E}[n_2 \mid s_f = H_T]p_2\left(\sum\limits_{j=0}^{\alpha-1} (1 - p_2)^j \right) \nonumber\\
&\hspace{1.2cm} + p_2\left(\sum\limits_{j=0}^{\alpha-1} (1 - p_2)^j(j+1) \right) + \alpha(1-p_2)^\alpha,\\
&= \mathbb{E}[n_2 \mid s_f = H_T]p_2\frac{1 - (1 - p_2)^\alpha}{1 - (1 - p_2)} \nonumber \\
&\hspace{1.2cm} + p_2\left(\sum\limits_{j=0}^{\alpha-1} (1 - p_2)^j(j+1) \right) + \alpha(1-p_2)^\alpha,\\
&= \frac{p_2\left(\sum\limits_{j=0}^{\alpha-1} (1 - p_2)^j(j+1) \right) + \alpha(1-p_2)^\alpha}{(1 - p_2)^\alpha} \label{exp2}.
\end{align}}}}
\indent The threshold $\alpha$ can be seen as a design parameter to adjust the expected number of necessary iterations in \eqref{exp1} and \eqref{exp2}. Moreover, there are advantages and disadvantages of choosing a high value and a low value for $\alpha$. {Given \eqref{assump1}, a high value of $\alpha$ will imply that it is highly unlikely to incorrectly go back to the first phase when $s_f = H_T$ is chosen.} {However, if $s_f \neq H_T$ is chosen, it will take many observations to go back to the first phase, and, thus, the learning method converges slowly to $s_f = H_T$. A small $\alpha$ implies that the learning method can quickly go back to the first phase to choose $s_f$ again if $s_f \neq H_T$. Moreover, a small $\alpha$ also implies that it is more likely to incorrectly go back to the first phase even if $s_f = H_T$.}\\
\indent Similar to the first phase, the IoT devices in the second phase will individually learn after receiving $m$ bits of information from any neighboring IoT device that learned and, then, they will transmit $m$ bits of updated information to neighboring IoT devices. The \emph{memory} that IoT device $i$ needs to learn $N_a$ is essentially the size of the received information set $\{e_{i-m+2}, \cdots, e_{i-1}, F_{i-1}, Q_{i-1}\}$, where $e_{i-1}$ is the environment observation of IoT device $i-1$, $F_{i-1}$ is the currently favored hypothesis, and $Q_{i-1}$ indicates whether $F_{i-1}$ should be changed. The value of $m$ must be at least two bits to capture $\{F_{i-1}, Q_{i-1}\}$, which are necessary for learning \cite{cover}, and the second phase reduces to the finite memory sequential learning introduced in \cite{cover} if $m = 2$. However, the finite memory sequential learning in \cite{cover} assumes that \eqref{assump1} holds true for all environment observations, which is not true for the IoT devices with limited observability. Therefore, the memory of previous observations is used to mitigate the limited observability.\\ 
\indent For $m > 2$, an IoT device $i$ computes its \emph{private belief} $x_i$ based on maximum likelihood using the observations of previous devices in learning sequence $\{e_{i-m+2}, \cdots, e_{i-1}\}$ and its own observation $e_i$. The private belief $x_i$ of IoT device $i$ is:
\begin{align}
x_i &= \argmax\limits_{s\in\{s_f, s_f'\}} \Pr(e_{i-m+2}, \cdots, e_i \mid s)\\
&= \argmax\limits_{s\in\{s_f, s_f'\}} \prod\limits_{\scriptscriptstyle{j=i-m+2}}^{i} {\textstyle{\Pr(e_j \mid s)}},\\
&= \argmax\limits_{s\in\{s_f, s_f'\}} \Pr(s_f\mid s)^{k_{s_f}}\Pr(s_f' \mid s)^{k_{s_f'}}\\
&= \argmax\limits_{s\in\{s_f, s_f'\}} k_{s_f}\ln(\Pr(s_f\mid s)) + k_{s_f'}\ln(\Pr(s_f' \mid s)),\label{psml}
\end{align}
where $k_{s_f}$ is the number of times that state $s_f$ is observed and $k_{s_f'}$ is the number of times that state $s_f'$ is observed out of $m-1$ observations. If the likelihoods $\Pr(e = i|j) \ \forall i,j \in \{s_f, s_f'\}$ are known by the devices, then $x_i$ can be determined easily using \eqref{psml}. However, when the likelihoods are unknown, the maximum likelihood can be simplified using \eqref{assump1} and \eqref{assump3}, and the private signal is:
\begin{align}
    x_i&=\left\{
                \begin{array}{ll}
                  s_f \hfill &\text{if} \ \frac{k_{s_f}\ln(\Pr(s_f\mid s_f)) + k_{s_f'}\ln(\Pr(s_f' \mid s_f))}{k_{s_f}\ln(\Pr(s_f\mid s_f')) + k_{s_f'}\ln(\Pr(s_f' \mid s_f'))}\leq 1,\\
                  s_f' \hfill &\text{otherwise},
                \end{array}\right.\\
              &= \left\{
                \begin{array}{ll}
                  s_f \hfill &\text{if} \ k_{s_f} \geq 1,\\
                  s_f' \hfill &\text{otherwise}.
                \end{array}\right. \label{pslrt}
\end{align}
Therefore, the private belief is $s_f$ if any of the environment observations is $s_f$.\\
\indent When $s_f \neq H_T$, the $k_{s_f}$ will most likely be $0$, because the observations from both within $r_d$ and outside of $r_d$ will highly unlikely be $s_f$. However, when $s_f = H_T$, $k_{s_f}$ will most likely be greater than $0$ if any of the observations is from within $r_d$. This is because the observations outside of $r_d$ is subject to high probability of missed detection, and, thus, the \eqref{assump1} does not hold. As the oldest observation $e_{i-m+2}$ is replaced by the newest observation $e_i$ and there are $m-2$ previous observations, all of the observations are from outside of $r_d$ after the learning sequence has progressed $m-2$ times outside of $r_d$. Therefore, assuming that the network is dense and most IoT devices have at least one neighboring IoT device, the furthest IoT device from the abnormal event with an observation from within $r_d$ is located $r_d + (m-2)r_c$. Therefore, by using $x_i$, $k_{s_f}$ of the IoT devices within $r_d + (m-2)r_c$ will most likely be greater than $0$ and have $x_i = s_f$ when $s_f = H_T$. In other words, \eqref{assump1} holds for more IoT devices, and more IoT devices will learn correctly. Moreover, the use of $x_i$ instead of $e_i$ is very effective in the second phase, because the learning progresses sequentially away from the abnormal event and the IoT devices participating in the second phase will most likely be outside of $r_d$. The effective detection radius $r_d'$ within which \eqref{assump1} holds can therefore be defined as follows:
\begin{equation}
r_d' = r_d + (m - 2)r_c, \label{effrd}
\end{equation}
where $m$ is the memory size of a device (in bits). For a larger memory size $m$, $r_d'$ is bigger, and, thus, \eqref{assump1} holds for more IoT devices. However, as the deployment region is finite, it is unnecessary to have very large values of $m$, which can render $r_d'$ much greater than the dimensions of the deployment region. {Moreover, for larger $m$, more energy is required to transmit and compute, which may be impractical for IoT devices with energy constraints. Therefore, it is necessary to choose an appropriate value of $m$ depending on deployment region, energy constraints, and delay requirements.}\\
\indent An IoT device $i$ in the second phase learns using $x_i$, $F_{i-1}$, and $Q_{i-1}$ by repeatedly testing whether the currently favored hypothesis $F_{i-1} \in \{0, 1\}$ should be changed. Moreover, a number of consecutive devices in the learning sequence need to collectively test and decide whether $F_{i-1}$ should be changed. Here, we define the notion of an \emph{S-block} as a set consecutive devices testing whether $F_{i-1}$ should be changed from $0$ to $1$ and an \emph{R-block} as a set of consecutive devices testing whether $F_{i-1}$ should be changed from $1$ to $0$. Every device in a learning sequence belongs to either an S-block or an R-block, and both blocks are alternating in a learning sequence to repeatedly test $F_{i-1}$. As shown in \cite{cover}, the lengths of the S and R-blocks must be chosen to ensure that $F_{i-1}$ will converge to true hypothesis.\\
\indent Since observing $s_f$ is equivalent to $1$ and observing $s_f'$ is equivalent to $0$, the IoT devices belonging to an S-block decide that $F_{i-1}$ should be changed from $s_f'$ to $s_f$ if all devices in that S-block observe $s_f$. Similarly, the IoT devices belonging to an R-block decide that $F_{i-1}$ should be changed from $s_f$ to $s_f'$ if all devices in that R-block observe $s_f'$. $Q_{i-1}$ is used to track if all of the observations are $s_f$ in S-block or $s_f'$ in R-block, and $F_{i-1}$ is changed accordingly by the IoT devices. After learning, IoT device $i$ updates $m$ bits of information by replacing $\{F_{i-1}, Q_{i-1}\}$ with $\{F_i, Q_i\}$ and the oldest observation $e_{i-m+2}$ with its own observation $e_{i}$. Moreover, IoT device $i$ propagates the updated information to its neighboring devices for them to learn.\\
\indent {The communication overhead involved in the second phase of the sequential learning pertains to each IoT device transmitting at most $m$ bits of necessary information to its neighboring devices only once, as an IoT device belonging to either phases of learning will not learn again. Since the binary hypothesis testing involving $F_{i-1}$ and $Q_{i-1}$ can be considered as finite state machine \cite{cover} and computing $x_i$ can be simplified using \eqref{assump1} and \eqref{assump3}, the computational complexity involved in the second phase of learning is $O(m)$.}
\subsection{Convergence of the Proposed Learning Method}
Our proposed learning method consists of two phases during which a favored state $s_f$ is chosen and then is repeatedly tested whether $s_f$ should be changed. For the first learning phase, it must be shown that the favored state $s_f = H_T$ can be chosen correctly during the first phase with nonzero probability. Moreover, for the second phase, it must be shown that the learning method will not revert back to first phase with probability approaching 1 if $s_f = H_T$. As $s_f \neq H_T$ can be chosen during the first phase, it must also be shown that the learning method will revert back to first phase with probability 1 when $s_f \neq H_T$. If the aforementioned properties of both first and second phases of the learning method can be shown, then the learning method will be guaranteed to converge to the correct underlying state $H_T$, as shown next. 
\begin{theorem}\label{pro1}
For a memory size $m$ such that $(m-2) \geq \alpha$, if \eqref{assump1} holds, then the proposed finite memory learning method will converge in probability to the true, underlying state $H_T$.
\end{theorem}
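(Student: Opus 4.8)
The plan is to establish the three facts flagged just before the statement and then combine them. Viewing the favored state as a (non-homogeneous) Markov process indexed by position in the learning sequence, I would prove: (i) each pass through the first phase outputs $s_f=H_T$ with probability bounded below by a positive constant; (ii) whenever a pass outputs some $s_f\neq H_T$, the second phase returns control to the first phase almost surely (with finite expected delay); and (iii) whenever a pass outputs $s_f=H_T$, the second phase reverts to the first phase only finitely often, almost surely. Granting (i)--(iii), the process performs an almost surely finite number of first-phase passes, each correct with probability at least a fixed $c>0$, so it eventually fixes $s_f=H_T$ and never leaves; hence the favored state equals $H_T$ from an almost surely finite index onward, which is convergence in probability to $H_T$. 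The hypothesis $(m-2)\ge\alpha$ enters only in step (iii).

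For step (i), the first-phase devices lie inside $r_d$, so each of the $K$ observations obeys \eqref{assump1}; the event that all $K$ of them equal $H_T$ has probability $\Pr(e=H_T\mid H_T)^K$, which is strictly positive by \eqref{likelihood1}, and on this event the counts satisfy $k_{H_T}=K>k_j$ for every $j\neq H_T$, so both the maximum-likelihood rule \eqref{ml} and its simplified form return $s_f=H_T$; hence $c:=\Pr(e=H_T\mid H_T)^K>0$ works. For step (ii), if $s_f\neq H_T$ then $p_1=\Pr(e=s_f\mid s_f\neq H_T)$ is small irrespective of detection range, so by \eqref{prob1} a run of $\alpha$ consecutive $s_f'$ outcomes has probability at least a positive constant; taking disjoint runs yields infinitely many independent opportunities, each causing a return with probability at least $p_b(1-p_1)^{\alpha}>0$, and the second Borel--Cantelli lemma forces a return almost surely, with the finite expected delay given by \eqref{exp1}. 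Thus the process cannot remain forever with $s_f\neq H_T$.

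The crux is step (iii). A return under $s_f=H_T$ requires $\alpha$ consecutive devices, at positions $i,\dots,i+\alpha-1$, all to produce the private belief $s_f'$. By \eqref{pslrt}, $x_j=s_f'$ holds if and only if every observation in device $j$'s window $\{e_{j-m+2},\dots,e_j\}$ of $m-1$ observations equals $s_f'$; since $(m-2)\ge\alpha$, equivalently $m-1>\alpha$, these $\alpha$ windows share a common observation $e^{*}$, so a return is possible only if $e^{*}=s_f'$. Invoking the reasoning behind the effective detection radius \eqref{effrd}, for the devices making up the blocks close enough to the abnormal event this shared observation originates inside $r_d$, where \eqref{assump1} applies, so $\Pr(e^{*}=s_f')=p_{01}$ is small; the same bound controls the probability that an R-block erroneously flips $F$ away from $s_f$, while S-blocks confirm $F=s_f$ with probability near one, consistent with \eqref{prob2} and the large value of $\mathbb{E}[n_2\mid s_f=H_T]$ from \eqref{exp2}. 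Choosing the S- and R-block lengths to grow as in \cite{cover} makes the sum over blocks of these error probabilities finite, so the first Borel--Cantelli lemma leaves only finitely many erroneous reverts or flips and $F$ is eventually locked at $H_T$. Assembling (i)--(iii) as in the first paragraph completes the proof.

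The step I expect to be the main obstacle is (iii): making rigorous that the $\alpha$ private-belief windows genuinely share an observation lying inside $r_d$ --- this is precisely where $(m-2)\ge\alpha$ meets the geometric bookkeeping behind \eqref{effrd}, and it requires a careful count of which hop the shared observation comes from --- and then exhibiting an admissible schedule of S- and R-block lengths, of the kind used in \cite{cover}, for which the per-block error probabilities are summable. A related subtlety is the tail of the learning sequence that propagates beyond $r_d'$, where \eqref{assump1} fails even for private beliefs; one must argue that $F$ has already locked at $H_T$ by the time the sequence reaches that region --- or restrict attention, as the paper does in practice, to the regime in which $r_d'$ already exceeds the dimensions of the deployment area --- so that the later, unreliable devices can only copy, never overturn, the correct favored state.
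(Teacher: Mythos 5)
Your outline (i)--(iii) mirrors the paper's plan, and your steps (i) and (ii) are essentially the paper's: positivity of $\Pr(e=H_T\mid H_T)^K$ gives a correctly chosen favored state with probability bounded away from zero, and under $s_f\neq H_T$ the $\alpha$-run condition recurs often enough to force a return. The genuine gap is in your step (iii), which is indeed the crux. The event that triggers a return to the first phase has a \emph{fixed} threshold $\alpha$: under $s_f=H_T$ it occurs on each disjoint stretch of the learning sequence with a constant positive probability (of order $(1-p_2)^{\alpha+m-2}$, your shared-observation bound notwithstanding), so these triggering probabilities are not summable, the first Borel--Cantelli lemma does not apply to them, and the trigger in fact occurs infinitely often almost surely. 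Growing the S- and R-block lengths as in \cite{cover} controls erroneous flips of $F_{i-1}$, but it does nothing for the return condition, whose threshold $\alpha$ does not grow. The missing ingredient is precisely what the paper's proof is built around: once $\alpha$ consecutive $s_f'$ are seen, the method returns only with a probability $p_b$ that is \emph{designed as a decreasing function of the observation count} $n_e$ (the paper proposes the sigmoid in \eqref{exp}), calibrated using the fact that the cumulative distributions \eqref{rightn} and \eqref{wrongn} rise to $1$ at very different rates because $p_2>p_1$; with such a $p_b$, returns are likely early when $s_f\neq H_T$ and have probability tending to $0$ when $s_f=H_T$, which is what yields convergence in probability. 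You invoke $p_b$ only as a constant factor in step (ii) and never use its decay, so your argument as stated would have the method revert infinitely often even from the true state; note also that the almost-sure lock-in you claim is stronger than the theorem's convergence in probability and is not what the paper's mechanism delivers.

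A secondary point: the paper uses $(m-2)\geq\alpha$ simply so that a single device's finite memory suffices to check whether $\alpha$ consecutive observations of $s_f'$ have occurred; your window-overlap/shared-observation reading and the geometric bookkeeping around $r_d'$ belong to the discussion \emph{after} the theorem (limited observability), whereas the theorem's hypothesis already assumes \eqref{assump1} holds for the participating devices, so that machinery is not needed inside this proof.
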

\begin{proof} See Appendix \ref{app1}. \end{proof}
\indent The result of Theorem \ref{pro1} applies whenever \eqref{assump1} holds true for all agents participating in the learning method and none of the agents suffers from the limited observation capability. However, the IoT devices in our model have limited observation capabilities, and, thus, \eqref{assump1} may not hold true for all of them. However, with sufficiently large memory size $m$ such that $r_d'$ in \eqref{effrd} is large enough to circumscribe the entire deployment region, \eqref{assump1} will hold true for all IoT devices. Therefore, with an additional condition on the value of $m$, the proposed finite memory learning method will converge in probability to $H_T$ for the IoT devices. Furthermore, the IoT devices are limited in memory, and, thus, the effect of having small or big memory sizes on the performance and convergence of learning will be further analyzed via simulations in Section IV. 
\subsection{Delay of Critical Messages}
For larger sizes $m$ of the devices' memory, \eqref{assump1} holds true for more IoT devices that can learn $\beta$ correctly during the second phase, and, thus, the delay of the IoT devices with critical messages with be smaller. As the learning propagates during $t$ time slots, an IoT device, which learned and is located furthest from a device with critical messages, is located at a distance of $tr_c$ from the abnormal event. In other words, after $t$ time slots, only the IoT devices that are located closer than $tr_c$ are part of a learning sequence. Therefore, whether an IoT device was able to learn correctly depends on both $m$ and $t$ as the IoT devices that are located within $r_t = \textrm{min}(tr_c, r_d')$ are likely to learn correctly.\\
\indent Assuming a massive IoT network with limited number of RAPs with $N \gg p$, $T_{\textrm{min}}$ is very large, and, thus, $D$ and $D'$ are dictated by $D_c$ and $D_c'$ respectively. Since $D_c'$ approaches 1 as $\beta$ increases, $D_{\textrm{th}}$ must be greater than 1 for a feasible value of $\beta$. For a given threshold delay $D_{\textrm{th}} < D_c$, the value of $\beta$ is, given system parameters $p_c$ and $N_a$:
\begin{align}
D_c' = \left(\frac{p_c + \beta}{p_c + \beta -1}\right)^{N_a-1} &\leq D_{\textrm{th}},\\
p_c + \beta &\leq \sqrt[\leftroot{-3}\uproot{3}N_a-1]{D_{\textrm{th}}}(p_c + \beta -1) \\ 
\beta(1 - \sqrt[\leftroot{-3}\uproot{3}N_a-1]{D_{\textrm{th}}}) &\leq \sqrt[\leftroot{-3}\uproot{3}N_a-1]{D_{\textrm{th}}}(p_c - 1) - p_c,\\
\frac{\sqrt[\leftroot{-3}\uproot{3}N_a-1]{D_{\textrm{th}}}(p_c - 1) - p_c}{1 - \sqrt[\leftroot{-3}\uproot{3}N_a-1]{D_{\textrm{th}}}} \leq \beta &= \left\lceil\frac{\sqrt[\leftroot{-3}\uproot{3}N_a-1]{D_{\textrm{th}}}(p_c - 1) - p_c}{1 - \sqrt[\leftroot{-3}\uproot{3}N_a-1]{D_{\textrm{th}}}}\right\rceil \label{beta}.
\end{align}
For the value of $\beta$ in \eqref{beta}, the delay threshold $D_{\textrm{th}}$ will be satisfied for the critical messages for $m$ and $t$ approaching $\infty$, while minimally affecting the periodic transmissions. However, it is impractical to have very large $m$ considering the restricted resources of IoT devices. Furthermore, it is of interest to analyze the transient phase of our learning method, because the IoT devices with critical messages try to transmit as other devices are still learning. Therefore, the maximum number of reallocated RAPs $\beta_t$ at time $t$, for finite values of $m$ and $t$, is of interest to determine the expected delay of critical messages for realistic values of $m$ and $t$. 
\begin{theorem}\label{pro2}
For finite values of $t$ and $m$, the maximum expected number of reallocated RAPs $\beta_t$ at time $t$ with average $n_t$ IoT devices that learned correctly is:
\begin{equation}
\mathbb{E}[\beta_t] = \frac{(p_f - \beta)!}{p_f!} \sum\limits_{b = 0}^\beta b P(n_t, b) P(p_f - n_t, \beta - b) C(\beta, b),
\end{equation}
where $P(n, k)$ is $k$-permutations of $n$ and $C(n,k)$ is $k$-combinations of $n$.
\end{theorem}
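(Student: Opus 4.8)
The plan is to recognize $\beta_t$ as (essentially) a hypergeometric random variable and then just read off its mean. First I would pin down the model at time $t$. In any given time slot there are $N_p = N/T_{\textrm{min}} = p_f$ devices sending periodic messages, each assigned by the BS a distinct contention-free RAP, so there is a bijection between these $p_f$ devices and the $p_f$ contention-free RAPs. By the discussion preceding the theorem, after $t$ slots the learning sequence has reached only devices within distance $tr_c$ of the abnormal event, and of those the ones inside $r_t = \min(tr_c, r_d')$, with $r_d'$ as in \eqref{effrd}, learn the true $N_a$ and hence the correct $\beta$ of \eqref{beta}; averaging the number of such correctly-learned devices that are among the $p_f$ periodic transmitters of the slot yields $n_t$. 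A designated RAP (one of the first $\beta$ in the globally known reallocation order) is, at best, available for critical traffic at time $t$ precisely when the periodic device bijectively assigned to it has already learned correctly, so $\beta_t$ equals the number of correctly-learned devices that land on the $\beta$ designated RAP positions — this is the most RAPs a critical device can exploit at time $t$, which is the ``maximum'' in the statement — and $0 \le \beta_t \le \beta$.

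Next I would compute the distribution of $\beta_t$. The key modeling observation is that the BS's RAP-to-device assignment is fixed independently of the abnormal event, whereas which devices learn correctly is determined by geometry relative to that event; hence, conditioned on there being $n_t$ correctly-learned devices among the $p_f$ periodic transmitters, every placement of the $p_f$ devices into the $p_f$ RAP slots is equally likely. Counting the placements with exactly $\beta_t = b$: choose which $b$ of the $\beta$ designated slots receive a learned device ($C(\beta,b)$ ways), fill those slots with distinct learned devices in order ($P(n_t,b)$ ways), fill the remaining $\beta-b$ designated slots with distinct not-yet-learned devices ($P(p_f-n_t,\beta-b)$ ways), and fill the other $p_f-\beta$ slots arbitrarily with the remaining devices ($(p_f-\beta)!$ ways). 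Dividing by the $p_f!$ total placements gives
\[
\Pr[\beta_t = b] = \frac{(p_f-\beta)!}{p_f!}\,C(\beta,b)\,P(n_t,b)\,P(p_f-n_t,\beta-b),
\]
where one uses the conventions $P(n,k)=C(n,k)=0$ for $k>n$ so the expression is valid for all admissible $n_t$, $\beta$, $b$.

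Finally, $\mathbb{E}[\beta_t] = \sum_{b=0}^{\beta} b\,\Pr[\beta_t = b]$ is exactly the claimed formula; as a consistency check this collapses to the hypergeometric mean $\beta n_t / p_f$. The main obstacle here is not the algebra but justifying the two modeling premises the count rests on: first, that correctness of learning is statistically independent of the BS's RAP assignment, so that the set of correctly-learned periodic transmitters is a uniformly random $n_t$-subset of the $p_f$ slots; and second, that $n_t$ is well captured by the number of periodic transmitters lying within $r_t = \min(tr_c, r_d')$ — in particular that essentially all devices inside $r_t$ do learn correctly under \eqref{assump1}, which is precisely where Theorem \ref{pro1} and the effective-radius argument behind \eqref{effrd} are invoked.
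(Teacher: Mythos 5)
Your proposal is correct and takes essentially the same route as the paper: the paper likewise sets $n_t = \frac{p_f \pi r_t^2}{A}$ with $r_t = \textrm{min}(tr_c, r_d')$ as the best-case number of correctly-learned periodic transmitters, asserts the same counting formula \eqref{betatprob} for the probability that exactly $b$ of the first $\beta$ ordered contention-free RAPs are held by correctly-learned devices under a uniformly random assignment, and sums to obtain $\mathbb{E}[\beta_t]$. Your factor-by-factor justification of the count and the hypergeometric-mean check $\beta n_t / p_f$ are simply a more detailed rendering of the paper's argument.
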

\begin{proof} See Appendix \ref{app2}. \end{proof}
\indent Theorem \ref{pro2} derives the maximum expected number of contention-free RAPs that will be reallocated to contention-based RAPs to reduce the delay of critical messages in a realistic IoT with finite $t$ and $m$. Moreover, it is interesting to note that when $m = \infty$ and $t = \infty$, $n_t = p_f$ as $r_d'$ will be large enough to include the entire deployment region. Therefore, the probability of having $\beta_t$ reallocated RAPs at time $t$ in \eqref{betatprob} is $1$ for $\beta_t = \beta$ and zero otherwise, and, thus, $\mathbb{E}[\beta_t] = \beta$. Moreover, for higher values of $m$, $t$, and, thus, $n_t$, $\mathbb{E}[\beta_t]$ will approach $\beta$ and the critical messages are more likely to satisfy $D_{\textrm{th}}$. With finite values of $m$ and $t$, the lowest expected delay that can be achieved with our proposed learning method is $\left(\frac{p_c + \mathbb{E}[\beta_t]}{p_c + \mathbb{E}[\beta_t] - 1}\right)^{N_a}$.
\section{Simulation Results and Analysis}
For our simulations, we consider a rectangular area of width $w$ and length $l$ within which the IoT devices are deployed following a Poisson point process of density $\lambda$ and an expected number of IoT devices $wl\lambda$. We let $w = l = 100$ m, $r_c = 2$ m, and $r_d = 10$ m, and we choose a time slot duration of $0.25$ ms \cite{tslot}. A random location within the deployment region is chosen to be the location of an abnormal event, and the IoT devices within a distance of $1$ m will have critical messages and initiate the learning procedure. The number of RAPs $p$ is $64$ \cite{rach}, which includes $p_f = 63$ contention-free RAPs and $p_c = 1$ contention-based RAPs. We also set $K = 3$ and $\alpha = 5$. {Moreover, the probability of missed detection $p_{01}'$ outside of $r_d$ is set to 0.9 \cite{learning6}}. All statistical results are averaged over a large number of independent runs. 

\begin{figure} 
    \centering
  \subfloat[Learning with smaller memory of $m =3$ bits.]{%
        \includegraphics[width = 9cm]{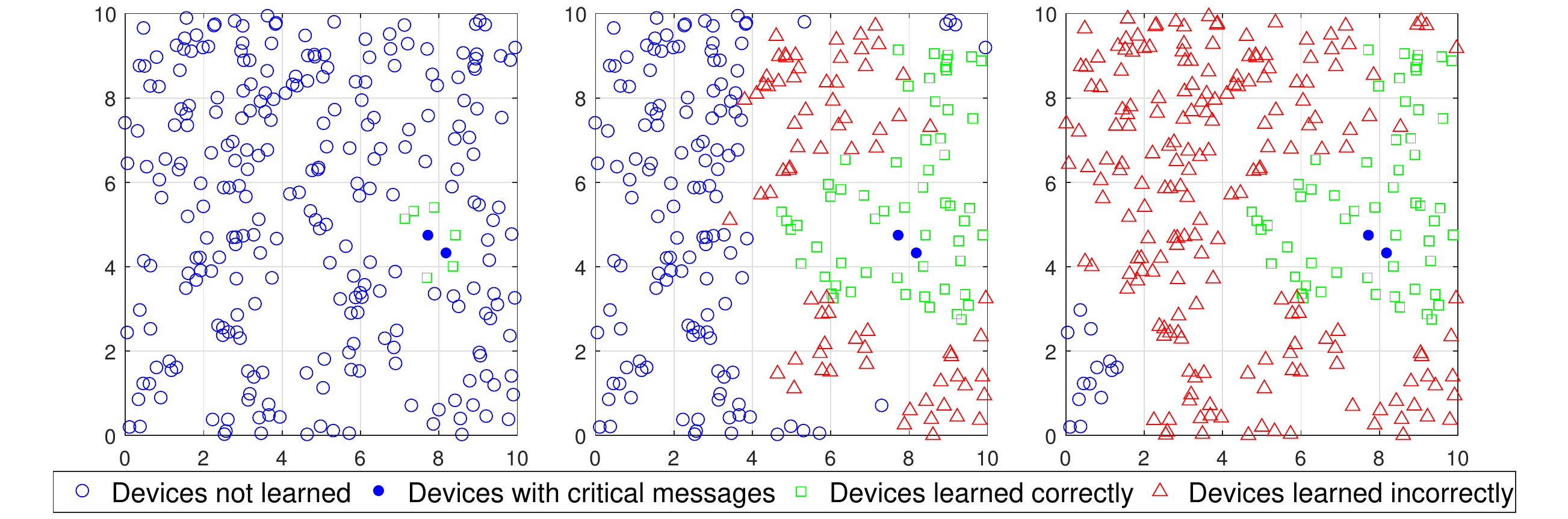}}
    \label{snapshot1}\\
  \subfloat[Learning with larger memory of $m = 10$ bits.]{%
        \includegraphics[width = 9cm]{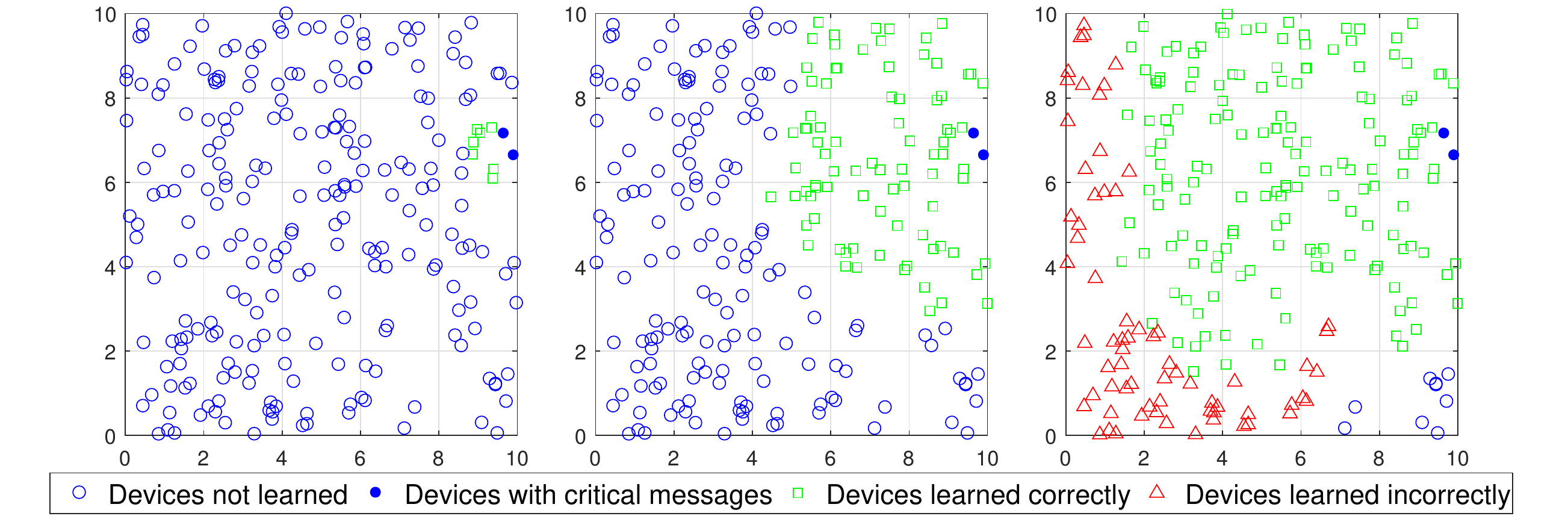}}
    \label{snapshot2}\hfill
  \caption{Snapshots showing how the proposed learning scheme progresses in small scale networks at $t = 0.25$ ms, $1.75$ ms, and $4.25$ ms, respectively.}\vspace{-.5cm}
  \label{snapshots} 
\end{figure}

\indent {Fig. \ref{snapshots} shows snapshots of learning with different memory sizes progressing in a small-scale network with different values of $w$, $l$, $r_c$, and $r_d$ at $t = 0.25$ ms, $1.75$ ms, and $4.25$ ms.} {These times are chosen to show how the proposed learning framework progresses over time from the initiation to the convergence.} For both scenarios, learning converges by $t = 4.25$ ms. At $t = 0.25$ ms, for both values of $m$, only the neighboring devices of the IoT devices with critical messages had a chance to learn in the first phase, and they all learned correctly. At $t = 1.75$ ms, for learning with $m = 3$ bits shown in Fig. \ref{snapshots}(a), IoT devices outside of $r_d'$ started to learn incorrectly due to the limited observability. However, for learning with $m = 10$ bits shown in Fig. \ref{snapshots}(b), IoT devices are still within $r_d'$ and are learning correctly. At $t = 4.25$ ms, even with larger memory size of $m = 10$ bits, some IoT devices are outside of $r_d'$ and are unable to learn correctly. For both memory sizes, there are few IoT devices that did not have a chance to learn as they are not within $r_c$ of any other IoT devices.

\begin{figure}
\centering
\includegraphics[width = 9cm]{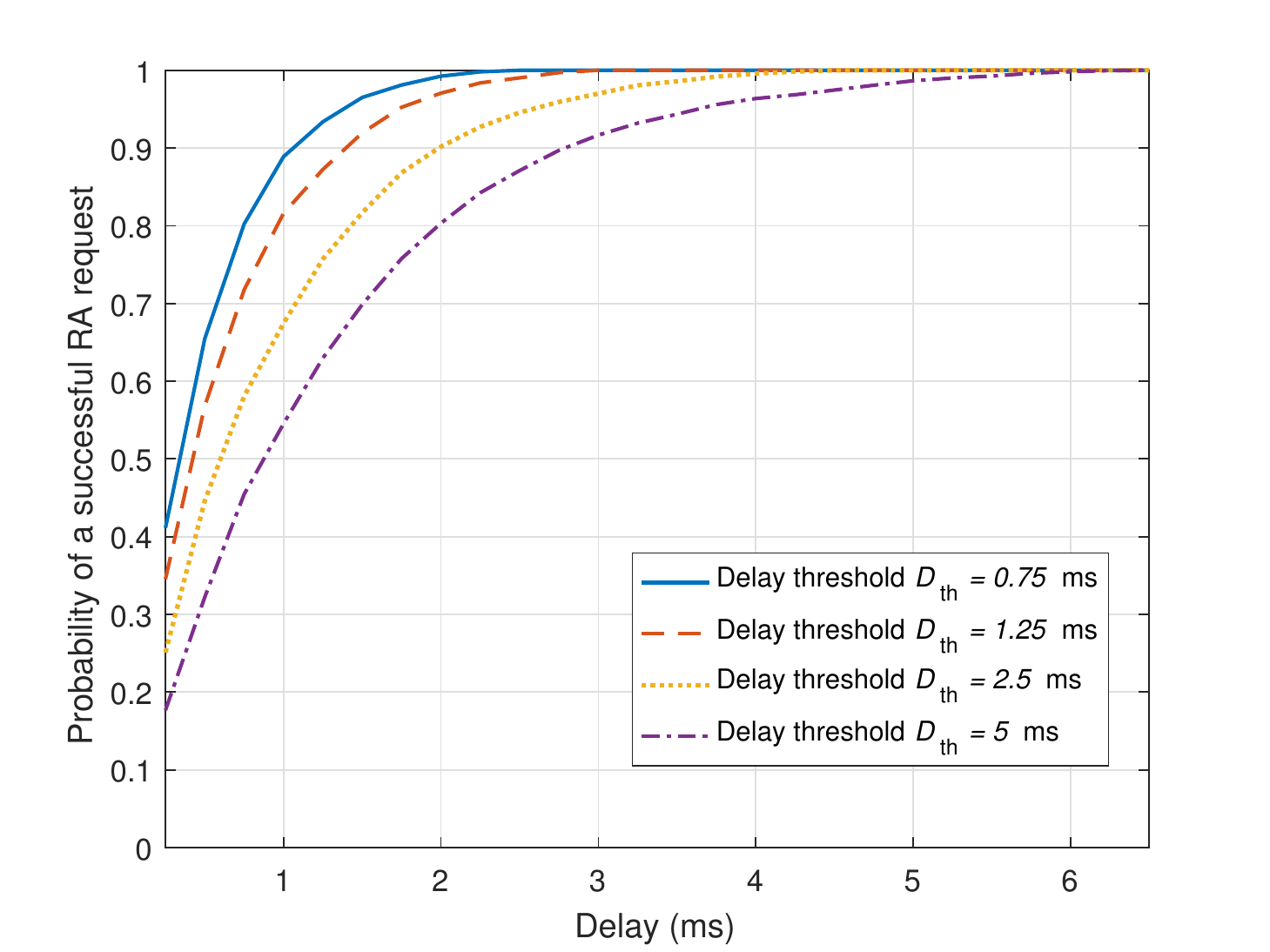}\vspace{-.2cm}
\caption{Cumulative distribution function of delay of critical messages for different $D_{\textrm{th}}$.}\vspace{-.5cm}
\label{latendth}
\end{figure}

\indent Fig. \ref{latendth} shows the cumulative distribution function for the delay of critical messages for different delay thresholds $D_{\textrm{th}}$ with $\lambda = 2$, $p_{11} = 0.9$, and $m = 5$ bits. Fig. \ref{latendth} shows that the delay threshold is satisfied, which means that the achieved delay is less than or equal to $D_{\textrm{th}}$, with probability $0.80$ for $D_{\textrm{th}} = 0.75$ ms, $0.87$ for $D_{\textrm{th}} = 1.25$ ms, $0.95$ for $D_{\textrm{th}} = 2.5$ ms, and $0.99$ for $D_{\textrm{th}} = 5$ ms. Moreover, the average achieved delay is $0.59$ ms for $D_{\textrm{th}} = 0.75$ ms, $0.72$ ms for $D_{\textrm{th}} = 1.25$ ms, $0.99$ ms for $D_{\textrm{th}} = 2.5$ ms, and $1.35$ ms for $D_{\textrm{th}} = 5$ ms. {Although the largest $D_{\textrm{th}} = 5$ ms has the longest average delay, it has the highest probability of satisfying the delay threshold.} This is because the IoT devices with critical messages have more time available to satisfy $D_{\textrm{th}}$ for larger $D_{\textrm{th}}$, and more IoT devices can learn $\beta$ with more time available. Moreover, with more time, the IoT devices with critical messages are more likely to be allocated contention-free RAP by the BS for successful RA request. Here, it is important to note that $D_{\textrm{th}}$ may not be satisfied, because $t$ and $m$ are finite. 

\begin{figure}
\centering
\includegraphics[width = 9cm]{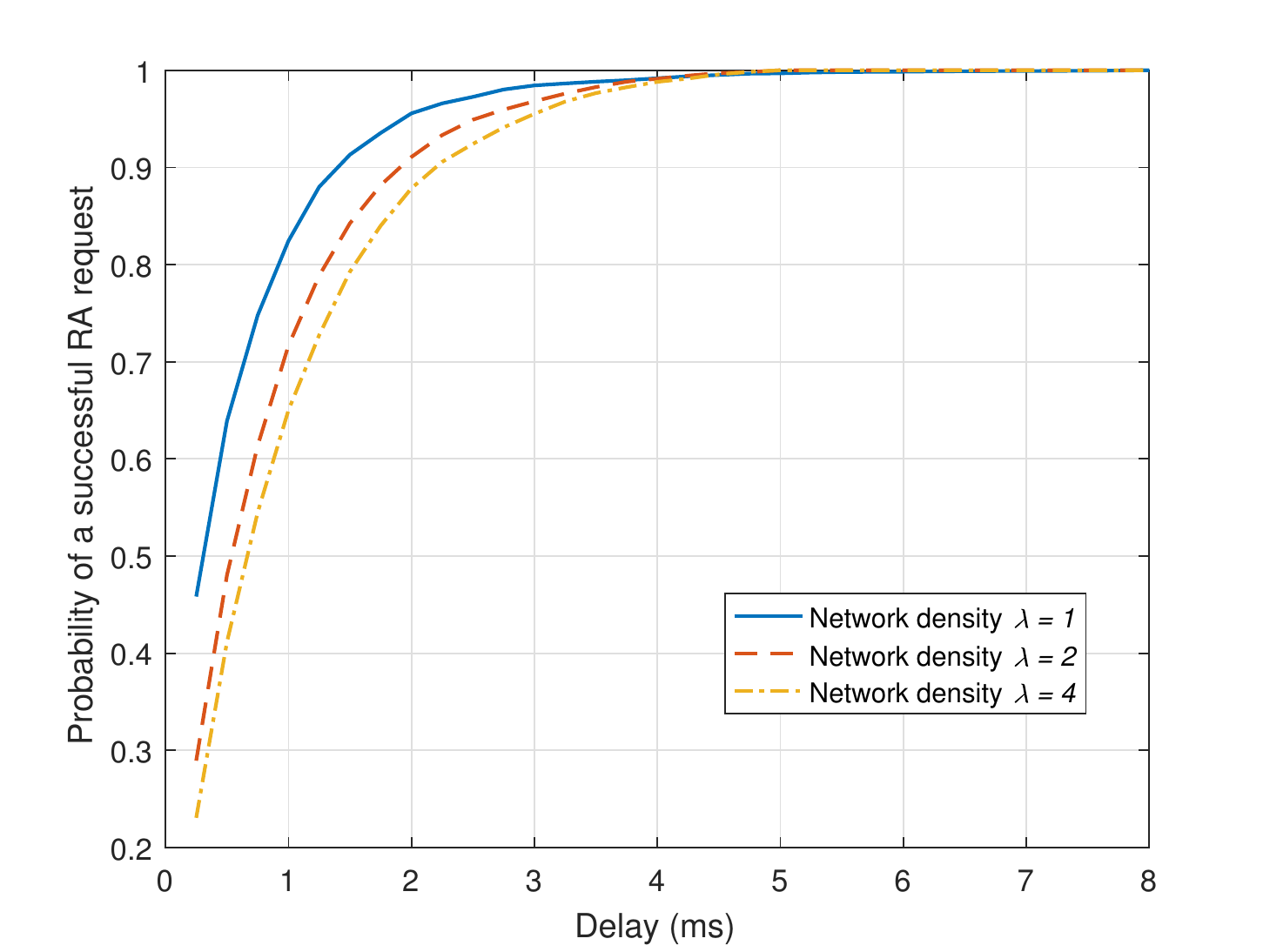}\vspace{-.2cm}
\caption{Cumulative distribution function of delay of critical messages for different $\lambda$.}\vspace{-.5cm}
\label{latenlambda}
\end{figure}

\indent Fig. \ref{latenlambda} shows the cumulative distribution function for the delay of critical messages for different network densities $\lambda$ with $D_{\textrm{th}} = 2.5$ ms, $p_{11} = 0.9$, and $m = 5$ bits. As our proposed learning method relies on the IoT devices communicating and propagating information, it is important that the IoT devices have at least one neighboring device to be able to learn $\beta$. Therefore, a massive IoT network with higher values of $\lambda$ is better for learning. However, with smaller value of $\lambda$, $T_{\textrm{min}}$ is smaller, and, thus, $D_f$ is smaller. The value of $D_f$ is $19.75$ ms for $\lambda = 1$, $39.63$ ms for $\lambda = 2$, and $79.25$ ms for $\lambda = 4$. Therefore, a small value of $\lambda$ increases the delay by increasing the number of isolated IoT devices and decreases the delay by decreasing the value of $D_f$. From Fig. \ref{latenlambda}, we can see that the probability of achieving a delay less than or equal to $D_{\textrm{th}}$ is $0.97$ for $\lambda = 1$, $0.95$ for $\lambda = 2$ and $0.92$ for $\lambda = 4$. Therefore, it is more likely to satisfy $D_{\textrm{th}}$ with lower $\lambda$, and, thus, the effect of having smaller $D_f$ is more significant than having more isolated IoT devices. Furthermore, this shows the effectiveness of our learning method in a massive network with devices that have limited resources and capabilities. 

\begin{figure}
\centering
\includegraphics[width = 9cm]{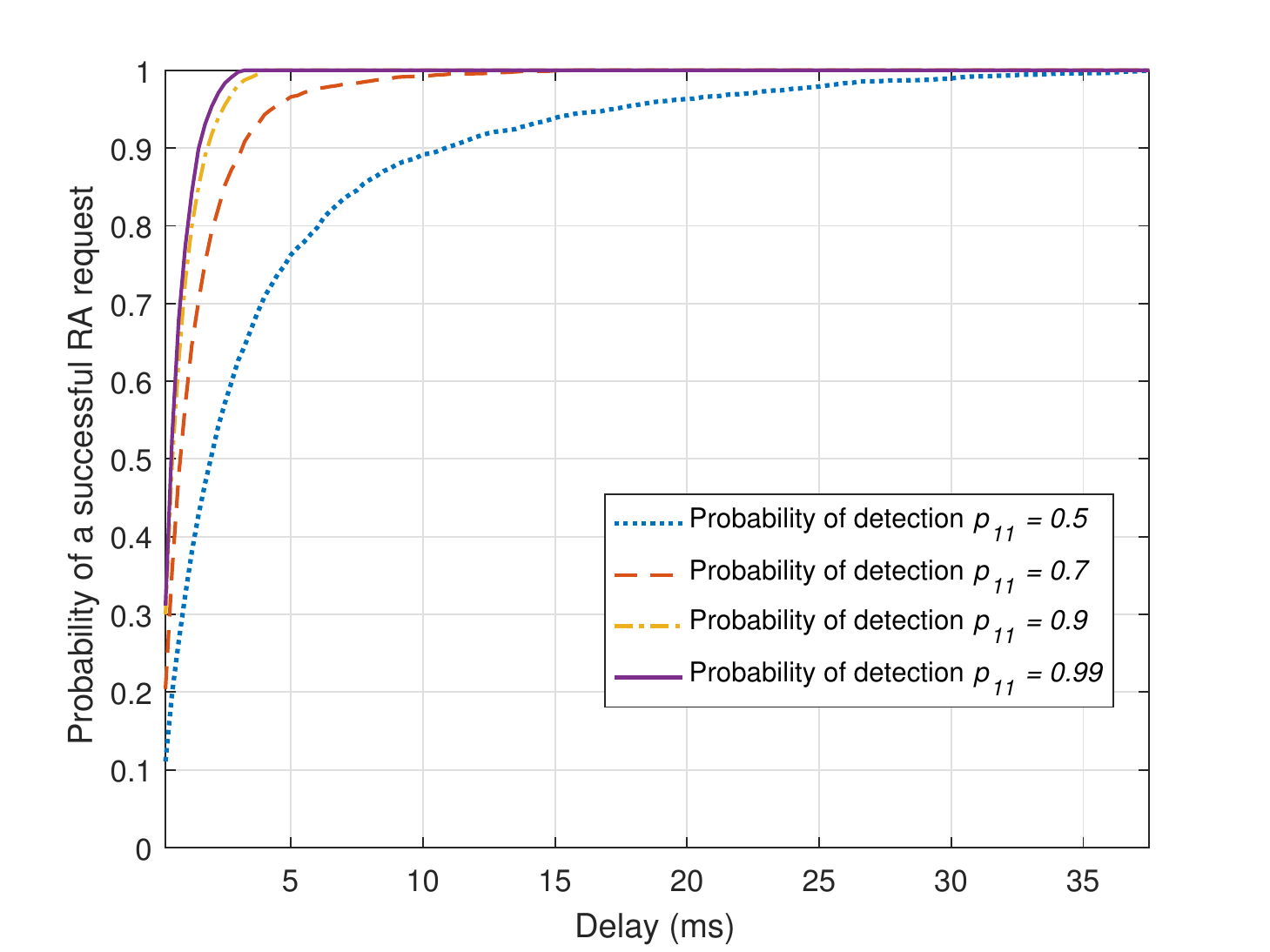}\vspace{-.2cm}
\caption{Cumulative distribution function of delay of critical messages for different $p_{11}$.}\vspace{-.5cm}
\label{latenprob}
\end{figure}

\indent Fig. \ref{latenprob} shows the cumulative distribution function for the delay of critical messages for different probabilities of detection $p_{11}$ with $D_{\textrm{th}} = 2.5$ ms, $\lambda = 2$, and $m = 5$ bits. Since the observation of an IoT device affects its learning as well as that of all IoT devices later in the sequence, the performance of our proposed method significantly depends on the quality of observations. For low $p_{11}$, the IoT devices are prone to false alarms as $p_{01} = 1 - p_{11}$, and, thus, they do not learn $N_a$ correctly, which causes higher critical message delays. The delay threshold is satisfied with probability $0.98$ for $p_{11} = 0.99$, $0.96$ for $p_{11} = 0.9$, $0.85$ for $p_{11} = 0.7$, and $0.57$ for $p_{11} = 0.5$. Unlike $D_{\textrm{th}}$ and $\lambda$, which indirectly affect the performance of learning, $p_{11}$ affects the performance of learning method directly and more significantly, because the probability of a successful RA request achieves $0.99$ after $11$ time slots. 

\begin{figure}
\centering
\includegraphics[width = 9cm]{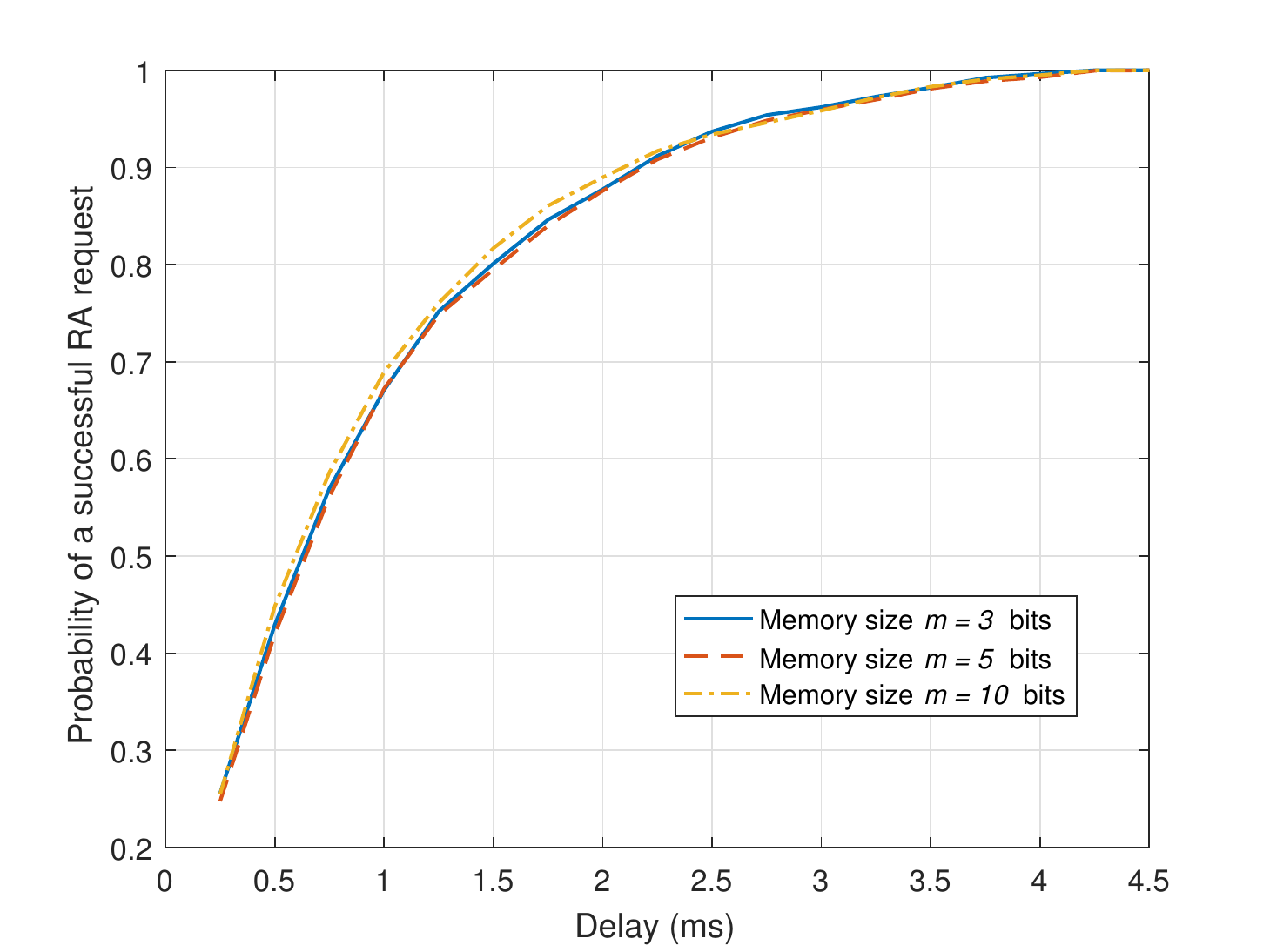}\vspace{-.2cm}
\caption{Cumulative distribution function of delay of critical messages for different $m$.}\vspace{-.5cm}
\label{latenmem}
\end{figure}

\indent Fig. \ref{latenmem} shows the cumulative distribution function for the delay of critical messages for different memory sizes $m$ with $D_{\textrm{th}} = 2.5$ ms, $\lambda = 2$, and $p_{11} = 0.9$. The effect of having larger values of $m$ is not significant in reducing the delay of critical messages even though the memory size $m$ is critical for the convergence of proposed learning method. {In our learning framework, the memory size $m$ extends the detection range $r_d$ such that the devices that are located far from the abnormal event can learn accurately. In other words, with larger $m$, the IoT devices located outside of $r_d$ have more observations of previous devices to compute more accurate private belief $x_i$ for the second phase of sequential learning. However, having large $m$ is less effective for the IoT devices within $r_d$ to learn more accurately. Since the RA request is successfully done during the first few time slots as shown in Fig. \ref{latenmem} when the devices within $r_d$ are learning, having larger $m$ has negligible effects on decreasing the delay of the critical messages. The effects of $m$ on the performance and the convergence of the learning method will be more pronounced when the devices outside of $r_d$ are learning as later shown in Fig. \ref{convmem1}.}\\
\indent The convergence of the proposed learning technique is analyzed via simulations by studying the percentage of IoT devices that learn correctly after all devices had a chance to learn $N_a$. The parameters that affect the performance of the proposed learning technique include memory size $m$, network density $\lambda$, and probability of detection $p_{11}$ within $r_d'$. The network density $\lambda$ affects the number of IoT devices that can learn such that an IoT device is more likely to have no neighboring device within $r_c$ to get necessary information for learning for lower values of $\lambda$. {Such isolated device will not be able to learn using our proposed learning method as it cannot receive necessary information for learning from its neighboring devices}. However, if the network is dense enough, the performance of learning is limited not by $\lambda$ but by memory size $m$ and probability of detection $p_{11}$. The memory size $m$ determines the effective detection radius $r_d'$ and $p_{11}$ controls how accurately the devices can observe within $r_d'$. For the subsequent simulations, we take $D_{\textrm{th}} = 3$, while varying $m$ and $p_{11}$. 

\begin{figure}[t]
\centering
\includegraphics[width = 9cm]{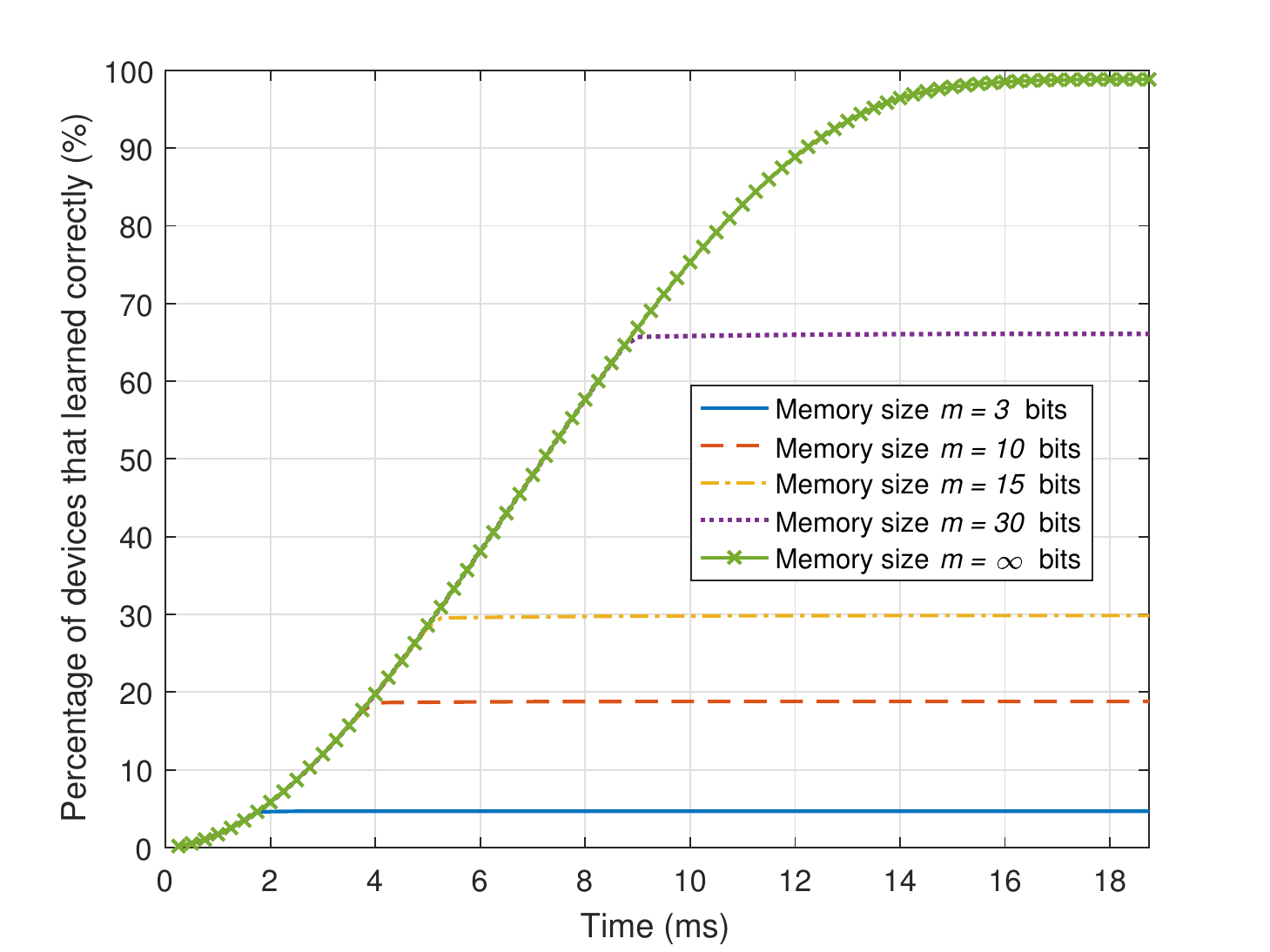}\vspace{-.2cm}
\caption{Average percentage of devices that learned correctly out of $N$ devices with varying $m$.}\vspace{-.5cm}
\label{convmem1}
\end{figure}

\indent {Fig. \ref{convmem1} shows the average percentage of devices that learned $N_a$ correctly out of all $N$ IoT devices for different memory sizes $m$ in bits with $\lambda = 2$.} As the IoT devices can learn $N_a$ correctly within $r_d'$ and $r_d'$ depends on $m$, the average percentage of devices that learned correctly is clearly limited for finite memories $m = \{3, 10, 15, 30\}$ bits, while the average percentage approaches 100\% for infinite memory $m = \infty$ bits. {Moreover, the average percentage of devices that learned correctly is an increasing function of $m$ in bits as $r_d'$ is an increasing function of $m$.} The maximum average percentage of devices that learned correctly is $4.67$\% for $m = 3$ bits, $18.79$\% for $m = 10$ bits, $29.84$\% for $m = 15$ bits, and $66.07$\% for $m = 30$ bits. This is because as the learning progresses further, the IoT devices outside of $r_d'$ will learn incorrectly, and, thus, the percentage of devices that learn correctly out of all $N$ will no longer increase. Moreover, the maximum average percentages for different values of $m$ are achieved at different time slots since a larger $m$ requires more time to achieve the maximum average percentage. {Furthermore, the results in Fig. \ref{convmem1} differ from the results shown in Fig. \ref{snapshots} in terms of the percentage of devices that learned $N_a$ correctly, because the results in Fig. \ref{snapshots} are done in smaller-scale network with different values of $w$, $l$, $r_c$, and $r_d$.}\\
\indent Fig. \ref{convprob} shows the average percentage of devices that learned $N_a$ correctly out of all $N$ IoT devices for different probabilities of detection $p_{11}$ with $m = 15$ bits and $\lambda = 2$. For lower $p_{11}$, the devices are more likely to observe and learn $N_a$ incorrectly. Learning incorrectly will affect subsequent devices in their learning as the proposed approach relies on the observation of previous IoT devices. For $p_{11} = 0.5$, less than $1\%$ of IoT devices learn correctly, and about $17.5\%$ of IoT devices learn correctly even for $p_{11} = 0.9$. Even for a $p_{11}$ close to $1$, Fig. \ref{convprob} shows that the average percentage of devices that learn $N_a$ correctly does not approach $100\%$. Hence, we can conclude that $m$ is much more critical in increasing the percentage of devices that learned correctly than $p_{11}$.

\begin{figure}[t]
\centering
\includegraphics[width = 9cm]{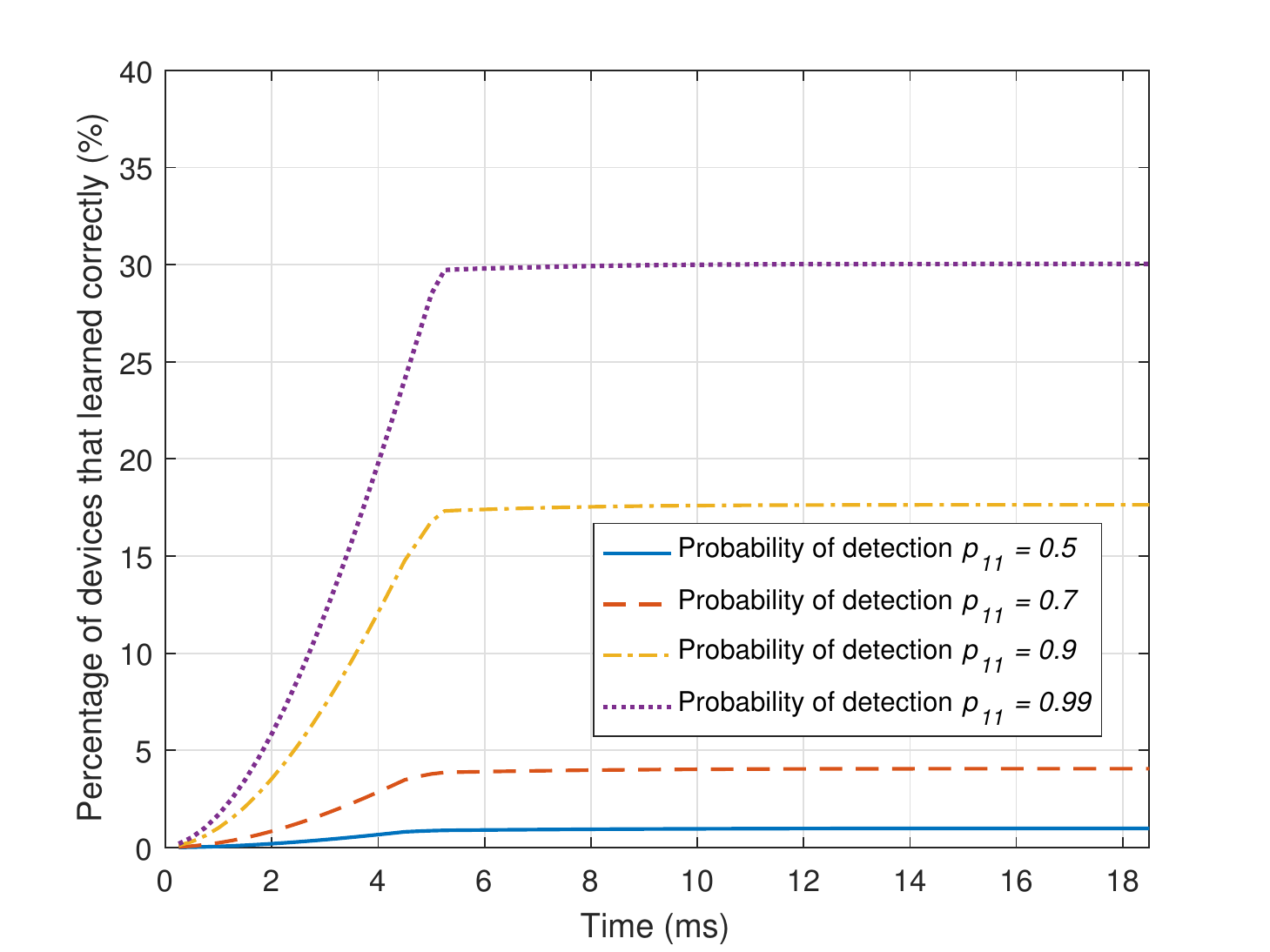}\vspace{-.2cm}
\caption{Average percentage of devices that learned correctly out of $N$ devices with varying $p_{11}$.}\vspace{-.5cm}
\label{convprob}
\end{figure}

\indent {Even with high probability of detection $p_{11}$, the average percentage of devices that learned correctly is about $30$\%. This is because the memory size $m$ is not large enough to extend $r_d'$ to circumscribe the entire deployment region.} For small $m$, some IoT devices do not satisfy \eqref{assump1} due to their limited observation range. In other words, for those devices, the effective detection range $r_d'$ is not extended enough to include the IoT devices located far away from the abnormal event, and the IoT devices located outside of $r_d'$ will not learn correctly. {Moreover, high $p_{11}$ is only applicable within the detection radius of $r_d$, and the probability of detection $p_{11}'$ outside of $r_d$ is still very low. Therefore, it is necessary for both $m$ and $p_{11}$ to be large to achieve high percentage of devices that learned correctly.} However, it is unnecessary to have all IoT devices learn correctly to satisfy a given delay threshold as shown in Fig. \ref{latendth}, Fig. \ref{latenlambda}, and Fig. \ref{latenprob}. Furthermore, since the memory size $m$ will be finite or even small in practical IoT systems, it is important to analyze the convergence of learning method with finite values of $m$ as shown in Fig. \ref{convmem1}.
\section{Conclusion}
In this paper, we have proposed a novel learning framework for enabling IoT devices to reallocate limited communication resources depending on the types and number of messages existing in the IoT. We have introduced a finite memory multi-state sequential learning using which the heterogeneous IoT devices can reallocate RAPs appropriately to reduce the delay of critical messages. We have shown that our proposed learning framework is suitable for the IoT devices with limited computational capability and finite memory and effective against the limited observation capability of IoT devices. Furthermore, we have proved the convergence of our proposed learning framework and derived the lowest expected delay of critical messages that can be achieved with our proposed learning framework. Simulation results have shown that the cumulative distribution function of delay of critical messages and average percentage of devices that learned correctly are functions of memory size, network density, and probability of detection. In particular, our proposed learning framework has shown to be effective in massive network with low delay threshold. 

\appendix \vspace{-0.2cm}
\subsection{Proof of Theorem 1} \label{app1}
The first phase of the proposed learning method is based on maximum likelihood estimation, and, thus, the chosen $s_f$ depends on the $K$ observations. Given the assumption in \eqref{assump1}, if all of the $K$ observations are $H_T$, then $s \in \mathcal{S}$ that will maximize \eqref{ml} will be $H_T$ and, thus, $s_f$ will be $H_T$. Moreover, given \eqref{likelihood1}, $\Pr(e = H_T | H_T)$ cannot be zero. Since the observations are independent, the probability of observing $K$ $H_T$ is $\Pr(e = H_T | H_T)^K$, and, thus, the probability of having $K$ observations of $H_T$ is not zero. Therefore, the favored state $s_f$ will be chosen correctly such that $s_f = H_T$ with nonzero probability from the first learning phase.\\
\indent If $s_f = H_T$ is chosen from first phase, then the true underlying state is $s_f$ and, thus, second phase will converge to $H_T$. However, if $s_f \neq H_T$ is chosen from the first phase, then the true underlying state is $s_f'$ and, thus, the second phase will not converge to $H_T$. Therefore, there is a returning condition, which is $\alpha$ consecutive observations of $s_f'$, using which the learning method can go back to the first phase if $s_f \neq H_T$ seems to be chosen from first phase.\\
\indent When $s_f \neq H_T$, the probability of observing $s_f'$ for $\alpha$ consecutive times, which may cause the learning method to go back to the first phase, is:
\begin{align}
    \hspace{-.4cm}\Pr(n_e | s_f \neq H_T) &= \Pr(\{e_{n'-\alpha+1}, \cdots, e_{n'}\}\\
    &= \{s_f', \cdots, s_f'\} \ \forall n' \leq n_e \mid s_f \neq H_T)
\end{align}
\begin{equation}
    =\left\{
                \begin{array}{ll}
                  0 \hfill &\text{if} \  n_e < \alpha,\\
                  q_1^{\alpha} \hfill &\text{if} \ n_e = \alpha,\\
                  q_1^{\alpha} + (N - \alpha)p_1 q_1^{\alpha}\\
                  \hspace{1cm} - p_1 q_1^{\alpha} \left(\sum\limits_{i=0}^{n_e - (\alpha+1)} \Pr(i | s_f \neq H_T)\right) \hfill &\text{if} \  n_e > \alpha,
                \end{array}
              \right. \label{rightn}
\end{equation}
where $q_1$ is $(1 - p_1)$ and $n_e$ is number of observations. Similarly, for the case when $s_f = H_T$, the probability of observing $s_f'$ for $\alpha$ consecutive times after $n_e$ observations is:
\begin{align}
    \hspace{-.4cm}\Pr(n_e | s_f = H_T) &= \Pr(\{e_{n'-\alpha+1}, \cdots, e_{n'}\}\\
    &= \{s_f', \cdots, s_f'\} \ \forall n' \leq n_e \mid s_f = H_T)
\end{align}
\begin{equation}
    =\left\{
                \begin{array}{ll}
                  0 \hfill &\text{if} \  n_e < \alpha,\\
                  q_2^{\alpha} \hfill &\text{if} \ n_e = \alpha,\\
                  q_2^{\alpha} + (N - \alpha)p_2 q_2^{\alpha}\\
                  \hspace{1cm} - p_2 q_2^{\alpha} \left(\sum\limits_{i=0}^{n_e - (\alpha+1)} \Pr(i | s_f = H_T)\right)\hfill &\text{if} \  n_e > \alpha,
                \end{array}
              \right. \label{wrongn}
\end{equation}
where $q_2$ is $(1 - p_2)$. Here, we note that $\Pr(n_e | s_f \neq H_T)$ and $\Pr(n_e | s_f = H_T)$ are cumulative probability distributions. Since $p_2 > p_1$, $\Pr(n_e | s_f \neq H_T)$ increases to 1 faster than $\Pr(n_e | s_f = H_T)$. However, as $n_e$ increases to infinity, $\Pr(n_e | s_f \neq H_T)$ and $\Pr(n_e | s_f = H_T)$ will both approach 1. Thus, as $n_e$ increases to infinity, the learning method will go back to the first phase regardless of $s_f$. Therefore, the probability density function $p_b$, which determines the probability of going back to Phase 1 once $s_f'$ is observed $\alpha$ consecutive times, must be designed appropriately to ensure that learning will only go back to Phase 1 when $s_f \neq H_T$.\\
\indent An appropriate choice for $p_b$ is such that $p_b$ is high for lower values of $n_e$, while $p_b$ approaches 0 as $n_e$ increases. Since $\Pr(n_e | s_f \neq H_T)$ increases to $1$ faster than $\Pr(n_e | s_f = H_T) \forall n_e$ as $p_2 > p_1$, $\Pr(n_e | s_f \neq H_T)$ is much greater than $\Pr(n_e | s_f = H_T)$ for small values of $n_e$, and, thus, it is much more likely to observe $s_f'$ for $\alpha$ consecutive times when $s_f \neq H_T$ than when $s_f = H_T$ for small values of $n_e$. Therefore, for small $n_e$, $p_b$ must be high for small values of $n_e$ so that the learning method is likely to go back to the first phase when $s_f \neq H_T$ as $\Pr(n_e | s_f \neq H_T)$ is big, while the learning method is unlikely to go back to the first phase as $\Pr(n_e | s_f = H_T)$ is small. However, as $\Pr(n_e | s_f = H_T)$ approaches 1 as $n_e$ goes to infinity, $p_b$ must approach $0$ as $n_e$ goes to infinity to prevent going back to first phase when $s_f = H_T$. One such choice for $p_b$ will be a sigmoid function that quickly decreases to 0 for $n_e > a$ for some value $a$. For instance, one such sigmoid function is:
\begin{equation}
C\left(1 - \frac{n_e - a}{1 + |n_e - a|}\right) \label{exp},
\end{equation}
where $C$ is a scaling factor to make \eqref{exp} an appropriate probability distribution. Similar to $\alpha$, the value of $a$ is a design parameter that determines the value of $n_e$ at which it becomes unlikely for the learning method to go back to the first phase. Furthermore, the value of $a$ should be chosen between the values of $n_e$ where $\Pr(n_e \mid s_f \neq H_T)$ is close to 1, while $\Pr(n_e \mid s_f = H_T)$ is close to 0, and where $\Pr(n_e \mid s_f = H_T)$ starts to significantly increase to 1.\\
\indent With appropriate choice for $p_b$, when $s_f \neq H_T$, the learning method is highly likely to go back to the first phase even when $n_e$ is low, because both $p_b$ and $\Pr(n_e \mid s_f \neq H_T)$ are high. However, when $s_f = H_T$, the learning method is highly unlikely to go back to the first phase, because either $p_b$ or $\Pr(n_e \mid s_f = H_T)$ is high, while the other is low. Therefore, as learning progresses and the number of observations $n_e$ increases to infinity, the learning method goes back to the first phase to change $s_f$ with probability approaching $0$ when $s_f = H_T$. However, $n_e$ increasing to infinity does not require the learning method to have infinite memory as our finite memory learning method updates the memory by replacing the oldest observation by newest observation. With $m$ such that $(m-2) \geq \alpha$, the finite memory will have enough observations to check if $\alpha$ consecutive observations of $s_f'$ have occurred. Since the first phase will choose $s_f = H_T$ with nonzero probability, the learning method will converge to $H_T$ in probability.

\subsection{Proof of Theorem 2} \label{app2}
For finite values of $t$ and $m$, the maximum ratio of IoT devices that learn correctly is $\frac{\pi r_t^2}{A}$, where $A$ is the area of geographical region on which the IoT devices are deployed and $r_t = \textrm{min}(tr_c, r_d')$ is a range within which the devices are likely to learn correctly. It assumes that $r_t$ is entirely within the network and all IoT devices learn correctly within $r_t$, which is best scenario for our learning method. In a given time slot at time $t$, the expected number of IoT devices with periodic messages transmitting in that time slot that learned correctly is $n_t = \frac{p_f \pi r_t^2}{A}$ and, the expected number of IoT devices with periodic messages transmitting in that time slot that did not learn correctly is $p_f - n_t$.\\
\indent Since there are IoT devices outside of $r_t$ that do not learn correctly with finite $m$ and $t$, there will be varying number of reallocated RAPs $\beta_t$ in different time slots, and the value of $\beta_t$ may range from $0$ to $\beta$. For a value of $\beta_t \in [0, \beta]$, the probability of having $\beta_t$ reallocated RAPs at any given time $t$ is:
\begin{equation}
\frac{(p_f - \beta)!}{p_f!} P(n_t, \beta_t) P(p_f - n_t, \beta - \beta_t) C(\beta, \beta_t) \label{betatprob}, 
\end{equation}
where $P(n, k)$ is $k$-permutations of $n$ and $C(n,k)$ is $k$-combinations of $n$. The probability in \eqref{betatprob} considers different cases at which $\beta_t$ IoT devices that learned correctly are assigned to use the first $\beta$ contention-free RAPs, which are to be allocated first, for their periodic messages. Furthermore, the expected number $\mathbb{E}[\beta_t]$ of contention-free RAPs that will be reallocated to contention-based RAPs is:
\begin{equation}
\mathbb{E}[\beta_t] = \frac{(p_f - \beta)!}{p_f!} \sum\limits_{b = 0}^\beta b P(n_t, b) P(p_f - n_t, \beta - b) C(\beta, b). \label{betat}
\end{equation}

\bibliographystyle{IEEEtran}

\begin{thebibliography}{10}
\providecommand{\url}[1]{#1}
\csname url@samestyle\endcsname
\providecommand{\newblock}{\relax}
\providecommand{\bibinfo}[2]{#2}
\providecommand{\BIBentrySTDinterwordspacing}{\spaceskip=0pt\relax}
\providecommand{\BIBentryALTinterwordstretchfactor}{4}
\providecommand{\BIBentryALTinterwordspacing}{\spaceskip=\fontdimen2\font plus
\BIBentryALTinterwordstretchfactor\fontdimen3\font minus
  \fontdimen4\font\relax}
\providecommand{\BIBforeignlanguage}[2]{{%
\expandafter\ifx\csname l@#1\endcsname\relax
\typeout{** WARNING: IEEEtran.bst: No hyphenation pattern has been}%
\typeout{** loaded for the language `#1'. Using the pattern for}%
\typeout{** the default language instead.}%
\else
\language=\csname l@#1\endcsname
\fi
#2}}
\providecommand{\BIBdecl}{\relax}
\BIBdecl

\bibitem{saad}
Z.~Dawy, W.~Saad, A.~Ghosh, J.~G. Andrews, and E.~Yaacoub, ``Toward massive
  machine type cellular communications,'' \emph{IEEE Wireless Communications},
  vol. 24, no.1, pp. 120--128, Feb. 2017.

\bibitem{homeauto}
S.~Kelly, N.~Suryadevara, and S.~Mukhopadhyay, ``Towards the implementation of
  {IoT} for environmental condition monitoring in homes,'' \emph{{IEEE Sensors
  Journal}}, vol. 13, no. 10, pp. 3846--3853, Oct. 2013.

\bibitem{smartgrid}
H.~Farhangi, ``The path of the smart grid,'' \emph{IEEE Power and Energy
  Magazine}, vol. 8, no.1, pp. 18--28, Jan. 2010.

\bibitem{lat2}
J.~J. Nielsen, G.~C. Madueno, N.~K. Pratas, R.~B. Sorensen, C.~Stefanovic, and
  P.~Popovski, ``What can wireless cellular technologies do about the upcoming
  smart metering traffic?'' \emph{IEEE Communications Magazine}, vol. 53, no.
  9, pp. 41--47, Sep. 2015.

\bibitem{moha1}
M.~Mozaffari, W.~Saad, M.~Bennis, and M.~Debbah, ``Unmanned aerial vehicle with
  underlaid device-to-device communications: Performance and tradeoffs,''
  \emph{IEEE Transactions on Wireless Communications}, vol. 15, no. 6, pp.
  3949--3963, Jun. 2016.

\bibitem{rvwr14}
K.~Hwang and M.~Chen, \emph{Big Data Analytics for Cloud, IoT and Cognitive
  Computing}.\hskip 1em plus 0.5em minus 0.4em\relax Wiley Publishing,
  2017\color{black}.

\bibitem{lat1}
M.~Weiner, M.~Jorgovanovic, A.~Sahai, and B.~Nikolic, ``Design of a
  low-latency, high-reliability wireless communication system for control
  applications,'' in \emph{Proc. of IEEE International Conference on
  Communications}, Sydney, Australia, June 2014.

\bibitem{generaliot}
L.~Xu, W.~He, and S.~Li, ``Internet of things in industries: A survey,''
  \emph{IEEE Transactions on Industrial Informatics}, vol. 10, no. 4, pp.
  2233--2243, Nov. 2014.

\bibitem{limitspec}
R.~Khan, S.~U. Khan, R.~Zaheer, and S.~Khan, ``Future internet: The internet of
  things architecture, possible applications and key challenges,'' in
  \emph{{Proc. of 10th International Conference on Frontiers of Information
  Technology}}, Islamabad, Pakistan, Dec. 2012.

\bibitem{manage1}
H.~Dhillon, H.~Huang, H.~Viswanathan, and R.~Valenzuela, ``Power-efficient
  system design for cellular-based machine-to-machine communications,''
  \emph{IEEE Trans. Wireless Commun.}, vol. 12, no.11, pp. 5740--5753, Nov.
  2013.

\bibitem{manage11}
H.~Dhillon, H.~Huang, H.~Viswanathan, and R.~Valenzuela, ``Fundamentals of
  throughput maximization with random arrivals for {M2M} communications,''
  \emph{IEEE Trans. Commun.}, vol. 62, no.11, pp. 4094--4109, Nov. 2014.

\bibitem{noma}
L.~Dai, B.~Wang, Y.~Yuan, S.~Han, C.~l.~I, and Z.~Wang, ``Non-orthogonal
  multiple access for {5G}: {S}olutions, challenges, opportunities, and future
  research trends,'' \emph{IEEE Communications Magazine}, vol.~53, no.~9, pp.
  74--81, Sep. 2015.

\bibitem{rvwr11}
M.~Chen, Y.~Hao, K.~Lin, Z.~Yuan, and L.~Hu, ``Label-less learning for traffic
  control in an edge network,'' \emph{{IEEE Network}}, vol. 32, no. 6, pp.
  8--14, Nov. 2018\color{black}.

\bibitem{fog1}
A.~Amjad, F.~Rabby, S.~Sadia, M.~Patwary, and E.~Benkhelifa, ``Cognitive edge
  computing based resource allocation framework for internet of things,'' in
  \emph{Proc. of the Second International Conference on Fog and Mobile Edge
  Computing}, Valencia, Spain, May 2017.

\bibitem{rvwr21}
Z.~Zhou, J.~Gong, Y.~He, and Y.~Zhang, ``Software defined machine-to-machine
  communication for smart energy management,'' \emph{{IEEE Communications
  Magazine}}, vol. 55, no. 10, pp. 52--60, Oct. 2017.

\bibitem{rvwr22}
F.~Morvari and A.~Ghasemi, ``Learning automaton based adaptive access barring
  for m2m communications in lte network,'' in \emph{2016 24th Iranian
  Conference on Electrical Engineering}, Shiraz, Iran, May 2016.

\bibitem{rvwr23}
Z.~Zhou, K.~Ota, M.~Dong, and C.~Xu, ``Energy-efficient matching for resource
  allocation in d2d enabled cellular networks,'' \emph{IEEE Transactions on
  Vehicular Technology}, vol. 66, no. 6, pp. 5256--5268, June
  2017\color{black}.

\bibitem{cap2}
M.~Hasan, E.~Hossain, and D.~Niyato, ``Random access for machine-to-machine
  communication in {LTE}-{Advanced} networks: Issues and approaches,''
  \emph{IEEE Communications Magazine}, vol. 51, no. 6, pp. 86--93, June 2013.

\bibitem{lat3}
G.~C. Madueno, C.~Stefanovic, and P.~Popovski, ``Reliable reporting for massive
  {M2M} communications with periodic resource pooling,'' \emph{IEEE Wireless
  Communications Letters}, vol. 3, no. 4, pp. 429--432, May 2014.

\bibitem{lat4}
G.~Durisi, T.~Koch, and P.~Popovski, ``Toward massive, ultrareliable, and
  low-latency wireless communication with short packets,'' \emph{Proc. of the
  IEEE}, vol. 104, no. 9, pp. 1711--1726, Sep. 2016.

\bibitem{cran}
Z.~Zhou, M.~Dong, K.~Ota, G.~Wang, and L.~T. Yang, ``Energy-efficient resource
  allocation for {D2D} communications underlaying cloud-{RAN}-based {LTE-A}
  networks,'' \emph{{IEEE Internet of Things Journal}}, vol. 3, no. 3, pp.
  428--438, June 2016.

\bibitem{cendcen}
V.~Venkatasubramanian, M.~Hesse, P.~Marsch, and M.~Maternia, ``On the
  performance gain of flexible {UL/DL} {TDD} with centralized and decentralized
  resource allocation in dense {5G} deployments,'' in \emph{Proc. of IEEE 25th
  Annual International Symposium on Personal, Indoor, and Mobile Radio
  Communication}, Washington DC, USA, Sep. 2014.

\bibitem{tae2}
T.~Park and W.~Saad, ``Resource allocation and coordination for critical
  messages using finite memory learning,'' in \emph{Proc. of the 2016 IEEE
  Globecom Workshops}, Dec. 2016.

\bibitem{narrowband}
M.~Chen, Y.~Miao, Y.~Hao, and K.~Hwang, ``Narrow band internet of things,''
  \emph{{IEEE Access}}, vol.~5, pp. 20\,557--20\,577, Sep. 2017.

\bibitem{rvwr13}
M.~Chen, Y.~Miao, X.~Jian, X.~Wang, and I.~Humar, ``Cognitive-lpwan: Towards
  intelligent wireless services in hybrid low power wide area networks,''
  \emph{{IEEE Transactions on Green Communications and Networking}}, pp. 1--1,
  Oct. 2018\color{black}.

\bibitem{tae}
T.~Park and W.~Saad, ``Learning with finite memory for machine type
  communication,'' in \emph{Proc. of the IEEE 50th Annual Conference on
  Information Science and Systems}, Princeton, USA, Mar. 2016.

\bibitem{our_survey}
T.~Park, N.~Abuzainab, and W.~Saad, ``Learning how to communicate in the
  internet of things: Finite resources and heterogeneity,'' \emph{{IEEE
  Access}}, vol.~4, pp. 7063--7073, Nov. 2016.

\bibitem{cover}
T.~Cover, ``Hypothesis testing with finite statistics,'' \emph{The Annals of
  Mathematical Statistics}, vol.~40, no.~3, pp. 828--835, Jun. 1969.

\bibitem{rach}
M.~Ali, E.~Hossain, and D.~Kim, ``{LTE/LTE-A} random access for massive
  machine-type communications in smart cities,'' \emph{IEEE Communications
  Magazine}, vol. 55, no.1, pp. 76--83, Jan. 2017.

\bibitem{signal}
C.~Yu, L.~Yu, Y.~Wu, Y.~He, and Q.~Lu, ``Uplink scheduling and link adaptation
  for narrowband internet of things systems,'' \emph{{IEEE Access}}, vol.~5,
  pp. 1724--1734, Feb. 2017.

\bibitem{fm1}
K.~Drakopoulos, A.~Ozdaglar, and J.~N. Tsitsiklis, ``On learning with finite
  memory,'' \emph{IEEE Transactions on Information Theory}, vol. 59, no.10, pp.
  6859--6872, May 2013.

\bibitem{learning6}
B.~C. Arrue, A.~Ollero, and J.~R.~M. de~Dios, ``An intelligent system for false
  alarm reduction in infrared forest-fire detection,'' \emph{IEEE Intelligent
  Systems and their Applications}, vol. 15, no. 3, pp. 64--73, June 2000.

\bibitem{tslot}
D.~Catania, M.~Sarret, A.~Cattoni, F.~Frederiksen, G.~Berardinelli, and
  P.~Mogensen, ``The potential of flexible {UL/DL} slot assignment in {5G}
  systems,'' in \emph{Proc. of IEEE 80th Vehicular Technology Conference},
  Vancouver, Canada, Sep. 2014.

\end{thebibliography}

\end{document}